\title{Unambiguisability and Register Minimisation of Min-Plus Models} 
\author{Shaull Almagor}{Department of Computer Science, Technion, Israel \and \url{https://shaull.cswp.cs.technion.ac.il/}} {shaull@technion.ac.il}{https://orcid.org/0000-0001-9021-1175}{}
\author{Guy Arbel}{Department of Computer Science, Technion, Israel} {guy.arbel@campus.technion.ac.il}{}{}
\author{Sarai Sheinvald}{Department of Computer Science, Technion, Israel} {surke@technion.ac.il}{https://orcid.org/0000-0002-0524-7390}{}
\authorrunning{S. Almagor,  G. Arbel and S. Sheinvald} 
\keywords{Automata, Weighted Automata, Determinisation, Unambiguous, Unambiguisation, Tropical, Min Plus} 
\tikzstyle{block} = [rectangle, draw, fill=blue!20, 
\tikzstyle{line} = [draw, -latex']
\renewcommand{\vec}[1]{\boldsymbol{#1}}
\newcommand{\bs}[1]{\boldsymbol{#1}}
\newcommand{\xconf}{\texttt{xconf}}
\newcommand{\supp}{\texttt{supp}}
\newcommand{\sep}{\$}
\newcommand{\bbN}{\mathbb{N}}
\newcommand{\bbZ}{\mathbb{Z}}
\newcommand{\bbQ}{\mathbb{Q}}
\newcommand{\bbNinf}{\mathbb{N}_{\infty}}
\newcommand{\bbZinf}{\mathbb{Z}_{\infty}}
\newcommand{\cA}{\mathcal{A}}
\newcommand{\cB}{\mathcal{B}}
\newcommand{\cD}{\mathcal{D}}
\newcommand{\cU}{\mathcal{U}}
\newcommand{\cN}{\mathcal{N}}
\newcommand{\cM}{\mathcal{M}}
\newcommand{\Utype}{$\mathfrak{U}$-type\xspace}
\newcommand{\Dtype}{$\mathfrak{D}$-type\xspace}
\newcommand{\tup}[1]{\langle #1 \rangle}
\newcommand{\init}{\texttt{init}}
\newcommand{\fin}{\texttt{fin}}
\newcommand{\pref}{\mathsf{pref}}
\newcommand{\weight}{\mathsf{wt}}
\newcommand{\norm}[1]{\|#1\|}
\newcommand{\runsto}[1]{\xrightarrow{#1}}
\newcommand{\minweight}{\mathsf{mwt}}
\newcommand{\commit}{\textsc{Com}}
\newcommand{\update}{\textsc{Updt}}
\newcommand{\trim}{\nrightarrow}
\newcommand{\keep}{\rightarrow}
\newcommand{\offset}{\mathtt{offset}}
\newcommand{\bigM}{{\bs{\mathfrak{m}}}}
\newcommand{\Cl}[1]{\Circled{#1}}
\newcommand{\xCl}[1]{\cancel{\Cl{#1}}}
\newcommand{\numin}[1]{\#_{#1}}
\newcommand{\upd}{\texttt{upd}}
\newcommand{\unable}{unambiguisable\xspace}
\newcommand{\unability}{unambiguisability\xspace}
\theoremstyle{definition}
\newtheorem{problem}{Problem}
\begin{document}

\maketitle

\begin{abstract}
We study the unambiguisability problem for min-plus (tropical) weighted automata (WFAs), and the register-minimisation problem for tropical Cost Register Automata (CRAs), which are expressively-equivalent to WFAs. Both problems ask whether the “amount of nondeterminism’’ in the model can be reduced. 
We show that WFA unambiguisability is decidable for tropical WFAs. Our proof is via reduction to WFA determinisability, which was recently shown to be decidable. To obtain this reduction, we develop a characterisation of unambiguisability via gaps between runs.
On the negative side, we show that CRA register minimisation is undecidable already for inputs with 7 registers, and hence also for any larger fixed number of registers.
\end{abstract}

\newpage

\section{Introduction}
\label{sec:intro}
Weighted Finite Automata (WFAs) are a popular quantitative computational model, defining functions from words to values~\cite{droste2009handbook,schutzenberger1961definition,chatterjee2010quantitative,Almagor2020Whatsdecidableweighted}. 
The semantics of WFAs are typically defined over a \emph{semiring}, with the most prominent settings being the \emph{field of rationals} $(\bbQ,+,\times)$, and the \emph{tropical semiring} $(\bbZ\cup \{\infty\},\min,+)$. 
Both semirings yield WFAs that are useful for modelling certain aspects of systems, with a wide spectrum of applications (see \cite{daviaud2020register,droste2009handbook,Almagor2020Whatsdecidableweighted} and references therein).
For example, the rational field can be used to define \emph{probabilistic automata}~\cite{paz2014introduction}, whereas the tropical semiring allows reasoning about the optimal way of using resources (e.g., energy consumption), since the semantics is to take the minimal weighted run among all the runs on a word.
Famously, tropical WFAs have been key to resolving the star-height conjecture~\cite{kirsten2005distance,Has82,Has00,LP04}.

As with many computational models, reasoning about WFAs becomes harder in the presence of nondeterminism. For example, equivalence of tropical WFAs is undecidable for nondeterministic automata, but decidable for deterministic ones~\cite{Kro94,Almagor2020Whatsdecidableweighted}.
Unlike Boolean automata, nondeterministic WFAs are strictly more expressive than their deterministic fragment for most semirings. 
Accordingly, a natural problem for WFAs is the \emph{determinisability problem}: given a WFA $\cA$, is there a deterministic WFA $\cD$ such that ${\cA}\equiv {\cD}$?
This problem has a rich history dating back to the 1990s~\cite{mohri1994compact,mohri1997finite} (see~\cite{almagor2026determinization} for more details). It was recently shown to be decidable for the rational field~\cite{bell2023computing,jecker2024determinisation}, and even more recently for the tropical semiring~\cite{almagor2026determinization}. 

The ``amount'' of nondeterminism can be measured in various ways. The most prominent is \emph{ambiguity}: a WFA is \emph{unambiguous} if every word has at most one accepting run. We can similarly define $k$-ambiguous, finitely ambiguous and polynomially ambiguous~\cite{book1971ambiguity}.
Unambiguous WFAs are strictly more expressive than deterministic WFAs, but retain some nice closure and algorithmic properties~\cite{mohri1997finite}. As such, a natural question is \emph{\unability}\footnote{Equally fun mouthfuls include: ``disambiguisability'', ``unambiguousability'', etc.}: given a WFA, is there an equivalent unambiguous WFA? 
For polynomially-ambiguous tropical automata, this problem was shown to be decidable in~\cite{kirsten2009deciding}. In addition, it is decidable over the rational field~\cite{bell2023computing}. 

In this work, we resolve the decidability of this problem for tropical WFAs with unrestricted nondeterminism, by reducing it to the determinisability problem. 
Note that for most models, determinisability and unambiguisability have been resolved in tandem, by first deciding unambiguisability, and then deciding determinisability on the equivalent unambiguous, if it exists. Interestingly, for tropical WFAs determinisation is resolved directly, while unambiguisability remained open. It is somewhat surprising, therefore, that its solution is via reduction \emph{to} determinisability.

Another measure of nondeterminism in WFAs stems from a closely related model -- that of Cost Register Automata (CRAs) with linear register updates~\cite{alur2013regular}. A cost register automaton has a deterministic control, equipped with several registers that hold values. At each step, the registers' contents is manipulated according to some semiring actions (in our case, $(\min,+)$). 
Over semirings, CRAs and WFAs are equally expressive~\cite{alur2013regular}.
A natural decision problem about CRAs is \emph{register minimisation}: given a CRA with $k$ registers, is there an equivalent CRA with $k'<k$ registers? 
We refine the results of~\cite{alur2013regular} and show that the number of registers in a CRA corresponds to the \emph{width} of a WFA: the maximal number of states that can be reached simultaneously (nondeterministically). This measure is incomparable with ambiguity as a measure for nondeterminism, but retains the flavour that width $1$ is exactly determinism. 
Using this equivalence, we show that register minimisation for CRAs is undecidable already for inputs with $7$ registers, and the construction extends to any larger fixed number of registers. 
We remark that for CRAs over the rational field, this problem is decidable~\cite{benalioua2024minimizing}.

 \paragraph*{Paper Organisation and Contributions}
In \cref{sec:prelim} we lay down basic definitions and recall some results about unambiguous WFAs. 
In \cref{sec:unambiguisability equiv characterization} we introduce a novel characterisation of unambiguisable WFAs via a notion of ``gaps''. 
In \cref{sec:unabmig and determin} we present our first main contribution -- a reduction from WFA \unability to WFA determinisability. In particular, this shows the decidability of \unability based on the recent breakthrough~\cite{almagor2026determinization}.
In \cref{sec: CRA register minimization} we show that register minimisation for CRAs is undecidable. 
We conclude with a discussion in \cref{sec:discussion}. 
Detailed proofs appear in the appendix.

\section{Preliminaries}
\label{sec:prelim}
For $k\in \bbN$ denote $[k]=\{1,\ldots,k\}$. For an alphabet $\Sigma$, we denote by $\Sigma^*$ (resp. $\Sigma^+$) the set of finite words (resp. non-empty finite words) over $\Sigma$. 
For a word $w=\sigma_1\cdots\sigma_n\in \Sigma^*$, we denote its length by $|w|=n$ and the set of its prefixes by $\pref(w)$. We write $w[i,j]=\sigma_i\cdots \sigma_j$ for the infix of $w$ corresponding to $1\le i\le j\le |w|$.
For a letter $\sigma\in \Sigma$ we denote by $\numin{\sigma}(w)$ the number of occurrences of $\sigma$ in $w$.

We denote by  $\bbNinf$ and $\bbZinf$ the sets $\bbN\cup \{\infty\}$ and $\bbZ\cup \{\infty\}$, respectively. We extend the addition and $\min$ operations to $\infty$ in the natural way: $a+\infty=\infty$ and $\min\{a,\infty\}=a$ for all $a\in \bbZinf$. By $\arg\min\{f(x)\mid x\in A\}$ we mean the set of elements in $A$ for which $f(x)$ is minimal for some function $f$ and set $A$.

 \paragraph*{Weighted Automata}
A \emph{$(\min,+)$ Weighted Finite Automaton} (WFA for short) is a tuple $\cA= \tup{Q,\Sigma, q_0, \Delta, F}$ with the following components:
\begin{itemize}
    \item $Q$ is a finite set of \emph{states}.
    \item $\Sigma$ is a finite \emph{alphabet}.
    \item $q_0\in Q$ is the \emph{initial} state (in \cref{sec: CRA register minimization} we allow a set $Q_0$ of initial states).\footnote{Having a set of initial states does not add expressiveness, as it can be replaced by a single initial state that simulates the first transition from the entire set.}
    \item $\Delta\subseteq Q\times \Sigma\times \bbZinf\times Q$ is a transition relation such that for every $p,q\in Q$ and $\sigma\in \Sigma$ there exists exactly\footnote{This is without loss of generality: if there are two transitions with different weights, the higher weight can always be ignored in the $(\min,+)$ semantics. Missing transitions can be introduced with weight $\infty$.} one weight $c\in \bbZinf$ such that $(p,\sigma,c,q)\in \Delta$.
    \item $F\subseteq Q$ is a set of \emph{accepting states}.
\end{itemize}
If for every $p\in Q$ and $\sigma\in \Sigma$ there exists at most one transition $(p,\sigma,c,q)$ with $c\neq \infty$, then $\cA$ is called \emph{deterministic}. 
We denote by $\norm{\cA}$ the maximal absolute value of any weight in $\Delta$.


 \paragraph*{Runs}
A \emph{run} of $\cA$ is a sequence of transitions $\rho=t_1,t_2,\ldots,t_m$ where $t_i=(p_i,\sigma_i,c_i,q_i)$ such that $q_i=p_{i+1}$ for all $1\le i<m$ and $c_i <\infty$ for all $1\le i \le m$.
We say that $\rho$ is a run \emph{on the word $w=\sigma_1\cdots\sigma_m$ from $p_1$ to $q_m$}, and we denote $\rho:p_1\runsto{w}q_m$. 
For an infix $x=w[i,j]$ we denote the corresponding infix of $\rho$ by $\rho[i,j]=t_i,\ldots,t_j$ (and sometimes by $\rho[x]$, if this clarifies the indices).
The \emph{weight} of the run $\rho$ is $\weight(\rho)=\sum_{i=1}^m c_i$. A run from the initial state is \emph{accepting} if it ends in $F$, i.e., if $p_1=q_0$ and $q_m\in F$. When the start state is explicit, we say that the run is accepting from that start state if it ends in $F$.

For a word $w$, we abuse the name of the WFA as the function it describes, and denote by ${\cA}(w)$ the weight assigned by $\cA$ to $w$, which is the minimal weight of an accepting run of $\cA$ on $w$. For convenience, we introduce some auxiliary notations.
For a word $w\in \Sigma^*$ and sets of states $Q_1,Q_2\subseteq Q$, denote
\[\minweight_\cA(Q_1\runsto{w} Q_2)=\min\{\weight(\rho)\mid \exists q_1\in Q_1,q_2\in Q_2,\ \rho:q_1\runsto{w}q_2\}\]
If $Q_1$ or $Q_2$ are singletons, we denote them by a single state (e.g., $\minweight_\cA(P\runsto{w} q)$ for some set $P\subseteq Q$ and state $q$).
Then, we can define
$\cA(w)=\minweight_\cA(q_0\runsto{w} F)$.
If there are no accepting runs on $w$, then $\cA(w)=\infty$.
The function $\cA:\Sigma^*\to \bbZinf$ can be seen as the weighted analogue of the \emph{language} of an automaton.

\begin{remark}[On initial and final weights and states]
    \label{rmk: initial and final weights}
    Weighted automata are often defined with initial and final weights, i.e., 
    $q_0$ is replaced with an initial vector $\init\in \bbZinf^Q$ (and in particular may have several initial states with finite weight), and there are designated accepting states or a final weight vector $\fin\in \bbZinf^Q$.
    Then, the weight of a run also includes the initial weight and final weight (which may be $\infty$).

    In \cref{sec:no need for init and fin and acc} we show that the \unability problem for this general model can be reduced to that of our setting. Therefore, it is sufficient to consider our model, without initial and final weights, and with a single initial state and a single accepting state (we use the latter assumption only in \cref{sec:unabmig and determin}). 
\end{remark}

We write $p\runsto{w}q$ when there exists some run $\rho$ 
such that $\rho:p\runsto{w}q$.
We lift this notation to concatenations of runs, e.g., $\rho: p\runsto{x}q\runsto{y}r$ means that $\rho$ is a run on $xy$ from $p$ to $r$ that reaches $q$ after the prefix $x$. We also incorporate this to $\minweight$ by writing e.g., $\minweight(q\runsto{x}p\runsto{y}r)$ to mean the minimal weight of a run $\rho:q\runsto{x}p\runsto{y}r$.

A WFA is \emph{trim} if every state is reachable from $q_0$ by some run and can reach an accepting state with some run. Note that states that do not satisfy this can be found in polynomial time (by simple graph search), and can be removed from the WFA without changing the weight of any accepted word. Throughout this paper, we assume that all WFAs are trim.

 \paragraph*{Determinisability}
We say that WFAs $\cA$ and $\cB$ are \emph{equivalent} if $\cA(w)= \cB(w)$ for every word $w$.
A WFA $\cA$ is \emph{determinisable} if it is equivalent to some deterministic WFA. Our first point of comparison is the following problem.
\begin{problem}[WFA Determinisation]
\label{prob:determinisation}
    Given a WFA $\cA$, decide whether $\cA$ is determinisable.
\end{problem}
This problem was recently shown to be decidable~\cite{almagor2026determinization}. 

 \paragraph*{Unambiguisability}
A WFA $\cA$ is \emph{unambiguous} if every word has at most one accepting run. Otherwise it is \emph{ambiguous}.
We say that $\cA$ is \emph{\unable} if it is equivalent to some unambiguous WFA. Our central object of study is the following problem.

\begin{problem}[WFA Unambiguisability]
\label{prob:unambiguisability}
    Given a WFA $\cA$, decide whether $\cA$ is \unable.
\end{problem}

\section{A Characterisation of Unambiguisability}
\label{sec:unambiguisability equiv characterization}
It is well-known that determinisability of WFA can be characterised by means of \emph{gaps}~\cite{filiot2017delay,almagor2026determinization}, namely by how far two potentially-minimal runs can get away from one another. We defer the discussion about this type of gaps to \cref{sec:determinisability gap characterisation}.

We now present an analogous characterisation for \unability. To distinguish the terms, we dub this characterisation \emph{\Utype gaps} (where $\mathfrak{U}$ stands for ``$\mathfrak{U}$nambiguous'').
Intuitively, we show that $\cA$ is \unable if and only if there is some bound $B\in \bbN$ such that any two accepting runs on a word $xy$ are no farther than $B$ apart after reading $x$, if the higher run can still become minimal after reading $y$. Conversely, $\cA$ is not \unable iff for every $B$ there exists a word $xy$ on which there exist two accepting runs that are farther than $B$ apart after reading $x$, and such that the higher run becomes minimal. We call $xy$ a \emph{witness to unambiguisability}. Such a witness is depicted in \cref{fig:utype gap witness}.

\begin{definition}[\Utype $B$-Gap Witness]
\label{def:U type B gap witness}
    Consider a WFA $\cA=\tup{Q,\Sigma,q_0,\Delta,F}$. For $B\in \bbN$, a \emph{\Utype $B$-gap witness over an alphabet $\Sigma$} consists of a pair of words $x,y\in \Sigma^*$ and states $p_1,q_1\in Q$, $p_2,q_2\in F$ such that there exist runs $\rho:q_0\runsto{x}p_1\runsto{y}p_2$ and $\chi:q_0\runsto{x}q_1\runsto{y}q_2$ and the following holds.
    \begin{itemize} 
        \item $\minweight(q_0\runsto{x}Q)=\weight(\chi[x])$, i.e., the prefix $\chi[x]:q_0\runsto{x}q_1$ is a minimal-weight run on $x$.
        \item $\minweight(q_0\runsto{xy}F)=\weight(\rho)$, i.e., $\rho$ is a minimal accepting\footnote{Note that there may be lower non-accepting runs.} run on $xy$.
        \item $\weight(\rho[x])-\weight(\chi[x])> B$, i.e., after reading $x$, the run $\rho$ is at least $B$ above the minimal run $\chi$.
    \end{itemize}
\end{definition}
We say that a WFA $\cA$ has \emph{\Utype gaps bounded by $B$} if there are no \Utype gap witnesses whose gap is greater than $B$; equivalently, there are no \Utype $(B+1)$-gap witnesses.
For brevity, we refer to \Utype gaps simply as ``gaps'' throughout this section. 
In \cref{sec:unabmig and determin}, we restore the \Utype notation as it is needed there.

The characterisation is as follows.
 \begin{restatable}{theorem}{unambigchar}
    \label{thm:unambig iff bounded U gap}
     Consider a WFA $\cA$, then $\cA$ is \unable if and only if there exists $B\in \bbN$ such that $\cA$ has gaps bounded by $B$.
\end{restatable}
The detailed proof is given in \cref{apx:thm unambig iff bounded U gap}. We present the intuition here.
\subsection{$\cA$ is Unambiguisable $\implies$ Bounded Gaps}
\label{sec:unambig implies bounded gap}
Let $\cA= \tup{Q,\Sigma, q_0, \Delta, F}$ and assume $\cA$ is \unable. Let $\cU=\tup{S,\Sigma,s_0,\Lambda,G}$ be an equivalent unambiguous WFA. 
In \cref{apx:basic properties of unambig} we recall standard results about unambiguous WFAs, namely that they can be negated. Additionally, WFAs can be summed using a standard product construction. We can therefore obtain a WFA $\cB$ for ``$\cA-\cU$'', so that 
for every $w\in \Sigma^*$ we have that either $\cA(w)=\cU(w)=\cB(w)=\infty$, or $\cB(w)=\cA(w)-\cU(w)=0$. Thus every finite value of $\cB$ is $0$.

Let $\bigM=\max\{\norm{\cA},\norm{\cU},\norm{\cB}\}$ denote the maximal weight appearing in any of $\cA,\cU$ and $\cB$ in absolute value. Thus, in a single transition, any run of these WFAs can change the weight by at most $\bigM$.

Assume by way of contradiction that $\cA$ does not have bounded gaps. In particular, there exists a $B$-gap witness for $B>2|S||Q|\bigM+1$, given by $x,y\in \Sigma^*$, $p_1,q_1\in Q,p_2,q_2\in F$ and runs $\rho:q_0\runsto{x}p_1\runsto{y}p_2$ and $\chi:q_0\runsto{x}q_1\runsto{y}q_2$ as per \cref{def:U type B gap witness}. Since $\cU$ is unambiguous, we can think of $\rho$ and $\chi$ as runs of $\cB$ with the same gap (i.e., the sum with $\cU$ does not affect the gap). Thus, the runs both end with weight $0$.

This gap size therefore implies that either $\chi[x]$ becomes very negative, or $\rho[x]$ becomes very positive (see \cref{fig:unambiguisable to bounded gaps}). In the former case, $\chi[x]$ is so low that we can find a short accepting suffix that leads to a negative-weight word, which is a contradiction.
In the latter case, reading $y$ after $\rho[x]$ must take a negative cycle, which we can pump to obtain a negative-weight word, which is again a contradiction.
We conclude that $\cB$ has bounded gaps.



\begin{figure}[ht]
\centering
\begin{subfigure}[b]{0.45\textwidth}
\centering
\begin{tikzpicture}[
    >=stealth,
    thick,
    every node/.style={inner sep=1pt},
    dot/.style={circle,draw,inner sep=1pt},
    mindot/.style={circle,draw,double,inner sep=1pt}
]

  \node[dot]   (q0) at (0,0)   {$q_0$};
  \node[dot]   (q1) at (3.5,-1.4)   {$q_1$};
  \node[dot]   (p1) at (3.5,0.6) {$p_1$};
  \node[mindot] (p2) at (6,0)   {$p_2$};
  \node[mindot] (q2) at (6,0.7)   {$q_2$};
  \node[mindot] (q2b) at (5,-0.8)   {$q'_2$};

  \draw[->,dashed] (q0) -- node[above,pos=0.7] {$\chi$} (q1);
  \draw[->] (q0) -- node[above,pos=0.7] {$\rho$} (p1);
  \draw[dotted] (q0) -- node[above,pos=0.7] {$\chi$} (p2);

 \draw[<->,blue] (q1) -- node[fill=white,
    inner sep=2pt
] {$>\!B$} (p1);

  \draw[->] (p1) -- (p2);
  \draw[dashed,->] (q1) -- (q2);
  \draw[dashed,->] (q1) -- node[below,pos=0.7] {$y'$} (q2b);

  \draw[decorate,decoration={brace,mirror,amplitude=4pt}]
        (0,-1.8) -- node[below=4pt] {$x$} (3.5,-1.8);
  \draw[decorate,decoration={brace,mirror,amplitude=4pt}]
        (3.5,-1.8) -- node[below=4pt] {$y$} (6,-1.8);

  \node[right=10pt] at (p2) {$\min$};

\end{tikzpicture}
\caption{$\chi$ becomes too negative.}
\label{fig:unambig to bounded gap negative case}
\end{subfigure}

\begin{subfigure}[b]{0.47\textwidth}
\centering
\begin{tikzpicture}[
    >=stealth,
    thick,
    every node/.style={inner sep=1pt},
    dot/.style={circle,draw,inner sep=1pt},
    mindot/.style={circle,draw,double,inner sep=1pt}
]

  \node[dot]   (q0) at (0,0)   {$q_0$};
  \node[dot]   (q1) at (3.5,-0.6)   {$q_1$};
  \node[dot]   (p1) at (3.5,1.4) {$p_1$};
  \node[mindot] (p2) at (6,0)   {$p_2$};
  \node[mindot] (q2) at (6,0.7)   {$q_2$};
  \node[mindot] (q2b) at (7,-0.8)   {$q'_2$};

  \draw[->,dashed] (q0) -- node[above,pos=0.7] {$\chi$} (q1);
  \draw[->] (q0) -- node[above,pos=0.7] {$\rho$} (p1);
  \draw[dotted] (q0) -- node[above,pos=0.7] {$\chi$} (p2);

 \draw[<->,blue] (q1) -- node[fill=white,
    inner sep=2pt
] {$>\!B$} (p1);

  \draw[->] (p1) -- (p2);
  \draw[dashed,->] (q1) -- (q2);
  \draw[dashed,->] (p1) -- (q2b);

  \draw[decorate,decoration={brace,mirror,amplitude=4pt}]
        (0,-1) -- node[below=4pt] {$x$} (3.5,-1);
  \draw[decorate,decoration={brace,mirror,amplitude=4pt}]
        (3.5,-1) -- node[below=4pt] {$y$} (6,-1);

  \node[right=10pt] at (p2) {$\min$};

\end{tikzpicture}
\caption{$\rho$ decreases too much after $p_1$, leading to a negative cycle.}
\label{fig:unambig to bounded gap positive case}
\end{subfigure}
\caption{Contradiction scenarios for \cref{sec:unambig implies bounded gap}. In \cref{fig:unambig to bounded gap negative case} the run $\chi$ becomes too negative, so that a short suffix induces a negative run to $q'_2$. In \cref{fig:unambig to bounded gap positive case}, the run $\rho$ decreases too much between $p_1$ and $p_2$, causing a negative cycle, which again leads to a negative run to $q'_2$.}
\label{fig:unambiguisable to bounded gaps}
\end{figure}

\subsection{Bounded Gaps $\implies$ $\cA$ is Unambiguisable}
\label{sec:bounded gap implies unambig}
\newcommand{\Bwindows}{B\textsf{-Win}}
\newcommand{\canrho}{\rho_{\preceq}}
Assume that $\cA$ has gaps bounded by $B$. Intuitively, we would want an equivalent unambiguous WFA to track the gaps from the minimal run, and disregard runs that go higher than $B$ above it. Indeed, this is the characterisation in the determinisable case. Unfortunately, the minimal \emph{prefix} of a run might not extend to a minimal run, as it may get ``cut''. We therefore need to nondeterministically guess the minimal run. Then, however, we lose unambiguity. 

To overcome this, we define a notion of \emph{canonical minimal run}, and show that it is unique and can be guessed and verified using an unambiguous WFA. 
We illustrate this in \cref{xmp:unambiguisation and canonical} below.

Fix some arbitrary linear order $\preceq$ on the states $Q$. We think of this order as a priority, where higher priority states are better. Consider a word $w=\sigma_1\cdots \sigma_n$ accepted by $\cA$ and let $\Upsilon$ be the set of minimal-weight accepting runs of $\cA$ on $w$. 
Since $\Upsilon$ is finite, we can denote its runs by $\{\rho^i=q_0^i,\ldots,q_n^i\mid 1\le i\le m\}$ for some $m$. 
We now describe a procedure for culling runs from $\Upsilon$ until we are left with a single run. 

Consider the sequence $\Upsilon_{n+1}\supset \Upsilon_n\supset\ldots \supset \Upsilon_0$ defined inductively (from $n$ to $0$) as follows.
\begin{itemize} 
    \item $\Upsilon_{n+1}=\Upsilon$.
    \item For $0\le k\le n$ we define $\Upsilon_{k}=\{\rho^i\mid \rho^i\in \Upsilon_{k+1}\wedge q_{k}^{i'}\preceq q_k^{i} \text{ for every }i' \text{ such that }\rho^{i'}\in \Upsilon_{k+1}\}$.
\end{itemize}
Intuitively, we consider the set of all minimal runs on $w$, and start scanning them from the end backwards. We first remove all runs for which $q^i_n$ is not $\preceq$-maximal. Then, from the remaining runs (if there are more than one), we keep only runs where $q^i_{n-1}$ is $\preceq$-maximal, and so on.

Note that for $0\le k\le n$, the runs in  $\Upsilon_k$ are all identical from index $k$. Therefore, $\Upsilon_0$ has a single run $\canrho$, which we dub the \emph{canonical run on $w$}.
By definition, $\canrho$ is a minimal run of $\cA$ on $w$. Also, since $\preceq$ is a linear order, the procedure above is deterministic, meaning that $\canrho$ is uniquely defined given $\preceq$.

We can now construct an equivalent unambiguous WFA $\cU$. Intuitively, upon reading a word $w$, the WFA $\cU$ attempts to track the canonical minimal run of $\cA$ on $w$. To do so, $\cU$ keeps track of all the runs in a window of weight $\pm B$ around a (nondeterministically chosen) state $q$. If all the runs stay close to $q$, then all the runs are tracked. However, once a run becomes too high or too low, the window tracks it as $\infty$ or $-\infty$, respectively.
Then, when the word ends, if the current state $q$ is accepting, has minimal weight in the window (in particular there are no accepting runs with weight $-\infty$) and $q$ has maximal priority, then this state accepts.

The crux of the construction is that due to the gap property, if we indeed track the canonical run, then all other accepting states end within its $\pm B$ window, with higher weight or lower priority. In addition, other accepting runs that do not become minimal do not yield accepting runs of $\cU$, since their windows invariably ``believe'' that the canonical run has lower weight or higher priority, and therefore are not marked as accepting.

\begin{example}
    \label{xmp:unambiguisation and canonical}
    Consider the WFA in \cref{fig:unambiguisable WFA}, with the ordering $q_0\preceq q_1\preceq\ldots \preceq q_5$. There are two runs on the word $aaa$, both minimal: $\rho_1=q_0,q_1,q_3,q_5$ and $\rho_2=q_0,q_2,q_4,q_5$. 
    We then have $\Upsilon_4=\Upsilon_3=\{\rho_1,\rho_2\}$. Since $q_3\preceq q_4$, we have $\Upsilon_2=\Upsilon_1=\Upsilon_0=\{\rho_2\}$, which is the canonical run.

    An equivalent unambiguous WFA $\cU$ is in \cref{fig:unambiguisation construction}. The top run tracks ``windows'' around $\rho_1$, reflecting the relative weight of each state from the corresponding state in $\rho_1$. The bottom run similarly tracks $\rho_2$. Notice, however, that from 
    $q_3,\left(\begin{aligned} q_3,&0\\[-3pt] q_4,&2 \end{aligned}\right)$ 
    there is no transition to $q_5$. The reason is that this state ``believes'' that $q_4$, which currently has minimal weight $2$ above $q_3$, can also reach $q_5$ with the same weight (namely $1$) as that from $q_3$, but since $q_4$ has higher priority, this disables the transition from $q_3$. In the formal construction this is enforced using a \emph{consistency check}.
\end{example}

\begin{figure}[ht]
\centering
\begin{subfigure}[b]{0.33\textwidth}
\centering
\begin{tikzpicture}[
  ->, >=stealth',
  auto,
  semithick,
  node distance=1.5cm,
  every state/.style={circle, draw, minimum size=16pt, inner sep=1pt}
  ]

  \node[state] (q0) at (0,0) {$q_0$};
  \node[state] (q1) at (1.5,0.9) {$q_1$};
  \node[state] (q3) at (3,0.9) {$q_3$};
  \node[state] (q2) at (1.5,-0.9) {$q_2$};
  \node[state] (q4) at (3,-0.9) {$q_4$};
  \node[state,accepting] (q5) at (4.5,0) {$q_5$};

  \node[coordinate] (init) at (-0.7,0) {};
  \draw[->] (init) -- (q0);

  \draw (q0) -- node[above,pos=0.3]  {$a,1$}  (q1);
  \draw (q1) -- node[above]       {$a,-2$} (q3);
  \draw (q3) -- node[above] {$a,1$}  (q5);

  \draw (q0) -- node[below=5pt,pos=0.3]  {$a,-1$} (q2);
  \draw (q2) -- node[below]       {$a,2$}  (q4);
  \draw (q4) -- node[below=5pt,pos=0.6] {$a,-1$} (q5);
\end{tikzpicture}
\caption{An \unable WFA $\cA$.}
\label{fig:unambiguisable WFA}
\end{subfigure}

\begin{subfigure}[b]{0.63\textwidth}
\centering
\begin{tikzpicture}[
    ->, >=stealth',
    auto,
    semithick,
    node distance=2.2cm,
    every node/.style={font=\small},
    conf/.style={draw, rounded corners=6pt, inner sep=4pt}
  ]

  \node[conf] (q0) at (0,0) {$q_0, \left(q_0,0\right)$};
  \node[coordinate] (init) at (-1.2,0) {};
  \draw[->] (init) -- (q0);

  \node[conf] (S10) at (2.8,0.9)
    {$q_1,\left(\begin{aligned} q_1,&0\\[-3pt] q_2,&-2 \end{aligned}\right)$};
  \node[conf] (S1m1) at (2.8,-0.9)
    {$q_2,\left(\begin{aligned} q_1,&2\\[-3pt] q_2,&0 \end{aligned}\right)$};

  \draw (q0) -- node[above] {$a,1$}  (S10);
  \draw (q0) -- node[below=5pt] {$a,-1$} (S1m1);

  \node[conf] (S20) at (6,0.9)
    {$q_3,\left(\begin{aligned} q_3,&0\\[-3pt] q_4,&2 \end{aligned}\right)$};
  \node[conf] (S2m1) at (6,-0.9)
    {$q_4,\left(\begin{aligned} q_3,&-2\\[-3pt] q_4,&0 \end{aligned}\right)$};

  \draw (S10)  -- node[above] {$a,-2$} (S20);
  \draw (S1m1) -- node[below] {$a,2$}  (S2m1);

  \node[conf] (Sf) at (9,0)
    {$q_5,\left(q_5,0\right)$};  

  \draw (S2m1) -- node[below] {$a,-1$} (Sf);

\end{tikzpicture}

\caption{The equivalent Unambiguous WFA $\cU$.}
\label{fig:unambiguisation construction}
\end{subfigure}
\caption{\cref{fig:unambiguisable WFA} has gaps bounded by $2$. In \cref{fig:unambiguisation construction} we demonstrate the construction of \cref{sec:bounded gap implies unambig}, with the order $q_0\preceq q_1\preceq q_2\preceq q_3 \preceq q_4\preceq q_5$. Crucially, note that the transition from $q_3$ to $q_5$ is removed in $\cU$. This is due to the consistency check, and since $q_3\preceq q_4$.}
\label{fig:unambiguisation from bounded gaps example}
\end{figure}

The precise construction and correctness proof are given in \cref{apx:sec:bounded gap implies unambig}.

\section{Unambiguisability and Determinisability}
\label{sec:unabmig and determin}
In this section we use our characterisation of \unable WFAs to obtain our main contribution -- a reduction from the unambiguisability problem to the determinisation problem.
The latter was recently shown to be decidable in~\cite{almagor2026determinization}.

\subsection{A Gap Characterisation for Determinisability}
\label{sec:determinisability gap characterisation}
We start by recalling a gap characterisation for determinisable WFAs, captured by \Dtype gap witnesses (where $\mathfrak{D}$ stands for ``$\mathfrak{D}$eterministic''). See \cref{fig:dtype gap witness} for a depiction.
\begin{definition}[\Dtype $B$-Gap Witness]
\label{def: det B gap witness}
    For $B\in \bbN$, a \emph{\Dtype $B$-gap witness over an alphabet $\Sigma$} consists of a pair of words $x,y\in \Sigma^*$ and states $q_1,p_1\in Q$, $p_2\in F$ such that there exist runs $\rho:q_0\runsto{x}p_1\runsto{y}p_2$ and $\chi:q_0\runsto{x}q_1$ and the following holds.
    \begin{itemize} 
        \item $\minweight(q_0\runsto{x}Q)=\weight(\chi)$, i.e. $\chi:q_0\runsto{x}q_1$ is a minimal-weight run on $x$ (not necessarily accepting).
        \item $\minweight(q_0\runsto{xy}F)=\weight(\rho)$, i.e., $\rho$ is a minimal accepting run on $xy$.
        \item $\weight(\rho[x])-\weight(\chi[x])> B$, i.e., after reading $x$ and reaching states $p_1,q_1$, the run $\rho$ is at least $B$ above the minimal run $\chi$.
    \end{itemize}
\end{definition}
We say that a WFA $\cA$ has \emph{\Dtype gaps bounded by $B$} if there are no \Dtype $B+1$ gap witnesses. 
A folklore result (see~\cite{almagor2026determinization} for a precise proof) states that bounded \Dtype gap witnesses characterise determinisability, as follows.
\begin{theorem}
    \label{thm:det iff bounded gap}
    Consider a trim WFA $\cA$, then $\cA$ is determinisable if and only if there exists $B\in \bbN$ such that $\cA$ has \Dtype gaps bounded by $B$.
 \end{theorem}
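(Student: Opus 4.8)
The plan is to prove both directions through the weighted subset (``determinisation'') construction, reusing the philosophy of \cref{sec:bounded gap implies unambig} but now collapsing nondeterminism into a single deterministic run rather than an unambiguous one. The containment ``bounded \Dtype gaps $\implies$ determinisable'' is a direct construction, while the converse extracts a gap bound from an assumed equivalent deterministic automaton by a pumping (twins-property) argument.

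For ``bounded \Dtype gaps $\implies$ determinisable'', suppose $\cA$ has \Dtype gaps bounded by $B$. I would build a deterministic $\cD$ by the min-plus subset construction restricted to a \emph{window of height $B$}: a state is a normalised configuration $f:Q\to\{0,\ldots,B,\infty\}$ whose least finite entry is $0$; a transition on $\sigma$ recomputes the $\minweight$-offsets exactly as in the construction of \cref{sec:bounded gap implies unambig}, emits the increment of the running minimum as its weight, and caps to $\infty$ every offset that exceeds $B$ (discarding that state). There are at most $(B+2)^{|Q|}$ configurations, so $\cD$ is finite and deterministic. For correctness, fix $w$ and its minimal accepting run $\rho$. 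For \emph{every} split $w=xy$, applying \cref{def: det B gap witness} with $\rho$ as the minimal accepting run and a minimal-weight run on $x$ shows that the absence of a $(B{+}1)$-gap witness forces $\weight(\rho[x])-\minweight(q_0\runsto{x}Q)\le B$; that is, $\rho$ stays within offset $B$ at every prefix, so it is never discarded and is tracked by $\cD$, which outputs $\weight(\rho)=\cA(w)$. Conversely $\cD$ never undercuts $\cA$, since it only follows genuine runs of $\cA$. Hence $\cD\equiv\cA$.

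For the converse, assume $\cA\equiv\cD$ with $\cD$ deterministic, $D$ states and $\norm{\cD}=N$; write $m(u)=\minweight(q_0\runsto{u}Q)$ and let $W(u)$ be the weight of the unique run of $\cD$ on $u$. I would first prove a \textbf{bounded-lead lemma}: $|W(u)-m(u)|\le C$ for $C=|Q|(N+\norm{\cA})$ and all $u$, by completing a minimal prefix run of $\cA$ with a simple co-accessible path of length $\le|Q|$ and equating $\cA(uz)=\cD(uz)$, which pins $W(u)$ within $C$ of $m(u)$ from both sides. (Here one must assume, or first enforce, co-accessibility, since the $\chi$ in \cref{def: det B gap witness} ranges over all states; a cheap non-co-accessible path would otherwise inflate gaps without harming determinisability.) Now assume for contradiction that gaps are unbounded. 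Offsets move by at most $2\norm{\cA}$ per letter, so a gap exceeding $B$ forces the witness prefix $x$ to be long; pigeonholing on the triple (state of $\cD$, state of the eventually-minimal-accepting run $\rho$, state of a minimal-weight run) along $x$ yields indices $i<j$ at which a common infix $v=x[i{+}1,j]$ simultaneously closes a self-loop in $\cD$ (weight $\Delta W$), along $\rho$ (weight $\Delta\rho$), and along the minimal run (weight $\Delta m$), with the offset strictly increasing, i.e.\ $\Delta\rho>\Delta m$.

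This exposes a classic twins-property contradiction: in the deterministic $\cD$ the infix $v$ induces a \emph{single} loop weight $\Delta W$, yet it must simultaneously serve a low completion (whose $\cA$-value advances at rate $\Delta m$ per pump) and the high completion coming from the original suffix $y$ (whose $\cA$-value advances at rate $\Delta\rho$). Equating $\cA=\cD$ on $uv^k z$ for both completions forces $\Delta W=\Delta m$ and $\Delta W=\Delta\rho$, hence $\Delta m=\Delta\rho$, contradicting $\Delta\rho>\Delta m$. The \textbf{main obstacle} is exactly the justification of these two advance rates: one must show that, under pumping of $v$, the pumped copy of $\rho$ \emph{remains} a minimal accepting run and the pumped minimal run \emph{remains} minimal (so that the two $\cA$-runs genuinely project to one $\cD$-run with one loop weight, revealing a true sibling-cycle discrepancy). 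This careful ``stays-minimal'' accounting is the delicate core of the necessity direction, and is precisely where the detailed argument of \cite{almagor2025determinization} is invoked.
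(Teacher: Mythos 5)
You should first note that the paper does not actually prove \cref{thm:det iff bounded gap}: it records the statement as folklore and defers the precise proof to \cite{almagor2025determinization}. So your proposal can only be measured against that standard argument, which it does follow in outline — a $B$-capped, offset-normalised subset construction for sufficiency, and a bounded-lead lemma plus a pumping/twins argument for necessity. Two of the delicate points you flag yourself are genuine and are exactly what the cited precise proof has to handle: the ``stays-minimal under pumping'' accounting in the necessity direction, and co-accessibility. On the latter you are sharper than the paper's own preliminaries, which define ``trim'' as reachable-from-$q_0$ only; the caption of \cref{fig:gap witness} confirms that co-accessibility is also intended, and without it the theorem is false (a reachable non-co-accessible state with a very negative cycle yields unbounded \Dtype gaps in a trivially determinisable automaton).

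The one concrete flaw you do not flag is the output accounting in the sufficiency direction. In this paper's model a WFA has no final weights, so a deterministic automaton's value on $w$ is exactly the sum of the weights along its unique run. Your $\cD$ emits the increments of the running minimum $m(x)=\minweight(q_0\runsto{x}Q)$, so the telescoped total on $w$ is $m(w)$, whereas $\cA(w)=\minweight(q_0\runsto{w}F)=m(w)+f_w(q_{\fin})$, where $f_w(q_{\fin})$ is the offset of the (unique) accepting state in the final window; these differ whenever the overall minimal run on $w$ is not accepting. The bounded-gap hypothesis does guarantee that $f_w(q_{\fin})\le B+1$ on accepted words, so the missing quantity is stored in the state, but with no final weights it cannot simply be added at the end. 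A standard repair is to emit instead the increment of $m(x)+\min\{f_x(q_{\fin}),B+2\}$, whose per-letter change is bounded and determined by consecutive window states, and which telescopes to $\cA(w)$ on accepted words. As written, your construction establishes equivalence with the wrong function.
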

\begin{remark}[\Utype vs.\ \Dtype gap witnesses]
\label{rmk:Utype vs Dtype}
There is an obvious similarity between \Dtype witnesses (\cref{def: det B gap witness}) and \Utype witnesses (\cref{def:U type B gap witness}), and understanding the differences between the two is key to our proof. 
First, notice that every \Utype $B$-gap witness is in particular a \Dtype $B$-gap witness. Indeed, being a \Dtype witness is a weaker requirement, so that the absence of \Dtype $B$-gap witnesses is a stronger requirement implying determinisability rather than \unability.

For the converse, a \Dtype $B$-gap witness is \emph{not} a \Utype $B$-gap witness when the run $\chi:q_0\runsto{x}q$ cannot be continued to an accepting run on $xy$ (and this is the only difference). 

\begin{figure}[ht]
\centering
\begin{subfigure}[b]{0.45\textwidth}
\centering
\begin{tikzpicture}[
    >=stealth,
    thick,
    every node/.style={inner sep=1pt},
    dot/.style={circle,draw,inner sep=1pt},
    mindot/.style={circle,draw,double,inner sep=1pt}
]

  \node[dot]   (a1) at (0,0)   {$q_0$};
  \node[dot]   (b1) at (3.5,-0.3)   {$q_1$};
  \node[dot]   (c1) at (3.5,1.5) {$p_1$};
  \node[mindot] (m1) at (6,0)   {$p_2$};
  \node[mindot] (m2) at (6,1)   {$q_2$};

  \draw[->,dashed] (a1) -- node[above,pos=0.7] {$\chi$} (b1);
  \draw[->] (a1) -- node[above,pos=0.7] {$\rho$} (p1);

 \draw[<->,blue] (b1) -- node[fill=white,
    inner sep=2pt
] {$>\!B$} (c1);

  \draw[->] (c1) -- (m1);
  \draw[dashed,->, line width=2pt] (b1) -- (m2);

  \draw[decorate,decoration={brace,mirror,amplitude=4pt}]
        (0,-0.7) -- node[below=4pt] {$x$} (3.5,-0.7);
  \draw[decorate,decoration={brace,mirror,amplitude=4pt}]
        (3.5,-0.7) -- node[below=4pt] {$y$} (6,-0.7);

  \node[right=10pt] at (m1) {$\min$};

\end{tikzpicture}
\caption{\Utype $B$-gap witness}
\label{fig:utype gap witness}
\end{subfigure}
\hfill
\begin{subfigure}[b]{0.47\textwidth}
\centering
\begin{tikzpicture}[
    >=stealth,
    thick,
    every node/.style={inner sep=1pt},
    dot/.style={circle,draw,inner sep=1pt},
    mindot/.style={circle,draw,double,inner sep=1pt}
]

  \node[dot]   (a1) at (0,0)   {$q_0$};
  \node[dot]   (b1) at (3.5,-0.3)   {$q_1$};
  \node[dot]   (c1) at (3.5,1.5) {$p_1$};
  \node[mindot] (m1) at (6,0)   {$p_2$};

  \draw[dashed,->] (a1) -- node[above,pos=0.7] {$\chi$} (b1);
  \draw[->] (a1) -- node[above,pos=0.7] {$\rho$} (c1);

 \draw[<->,blue] (b1) -- node[fill=white,
    inner sep=2pt
] {$>\!B$} (c1);

  \draw[->] (c1) -- (m1);

  \draw[decorate,decoration={brace,mirror,amplitude=4pt}]
        (0,-0.7) -- node[below=4pt] {$x$} (3.5,-0.7);
  \draw[decorate,decoration={brace,mirror,amplitude=4pt}]
        (3.5,-0.7) -- node[below=4pt] {$y$} (6,-0.7);

  \node[right=10pt] at (m1) {$\min$};

\end{tikzpicture}
\caption{\Dtype $B$-gap witness}
\label{fig:dtype gap witness}
\end{subfigure}
\caption{$B$-gap witness. The vertical height represents the weight. After reading $x$, the run $\chi$ is minimal, and $\rho$ is far above it. Upon reading $y$, $\rho$ continues to become a minimal run. In \Utype witnesses, $\chi$ must also continue to become accepting. In \Dtype, there is no requirement on $\chi$ (but $q_1$ can reach $F$ via \emph{some} word, since the automaton is trim).} 
\label{fig:gap witness}
\end{figure}

\end{remark}

\subsection{Reducing Unambiguisability to Determinisability}
\label{sec:reduction}
We now turn to our main result.
\begin{theorem}
\label{thm:main reduction}
    The Unambiguisability problem is reducible to the Determinisability problem.
\end{theorem}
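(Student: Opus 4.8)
The plan is to prove \cref{thm:main reduction} through the two gap characterisations now available: $\cA$ is \unable iff it has bounded \Utype gaps (\cref{thm:unambig iff bounded U gap}), while a WFA is determinisable iff it has bounded \Dtype gaps (\cref{thm:det iff bounded gap}). Accordingly, I would construct from $\cA$ a WFA $\cA'$, effectively computable, with the property that $\cA'$ has bounded \Dtype gaps if and only if $\cA$ has bounded \Utype gaps. Feeding $\cA'$ to the determinisability procedure of~\cite{almagor2025determinization} then decides \unability of $\cA$. The whole design is dictated by \cref{rmk:Utype vs Dtype}: the \emph{only} difference between a \Utype and a \Dtype $B$-gap witness is that in the \Utype case the minimal run $\chi:q_0\runsto{x}q_1$ must additionally continue, \emph{along the same suffix $y$}, to an accepting state. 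So the entire burden of the construction is to force the minimal-weight reference run to be extendable to an accepting run on the actual suffix of the witnessing accepting run $\rho$.

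The construction I would attempt augments $\cA$ with a fresh separator $\sep$ and processes words of the form $x\sep y$. The first phase simulates $\cA$ on $x$ faithfully (same weights), so that the prefix weight of the accepting, ``high'' run equals $\weight(\rho[x])$. Reading $\sep$ enters a product-style second phase that continues this run through $y$ to realise the minimal accepting run $\rho$, while \emph{simultaneously} launching a second, certifying computation that starts from a nondeterministically chosen state and must itself reach $F$ while reading the same $y$, with acceptance requiring both components to land in $F$. The weights are arranged so that the \Dtype gap measured at the separator coincides with $\weight(\rho[x])-\weight(\chi[x])$, but now the reference (minimal) run on the prefix is, by the second-phase acceptance requirement, forced to end in a state from which $F$ is reachable on $y$. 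In effect, $\cA'$ should expose the prefix minimum only over those runs certified to co-accept $y$, which is exactly the \Utype reference.

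I would then prove the equivalence in two directions. The easy direction lifts a \Utype $B$-gap witness $x,y,\rho,\chi$ of $\cA$ to a \Dtype $B$-gap witness of $\cA'$ on $x\sep y$: the high run is $\rho$ lifted through both phases, and since $\chi$ attains the global minimum $\minweight(q_0\runsto{x}Q)$ and does continue to accept $y$, its lift realises the prefix minimum of $\cA'$, yielding a gap of the same size. The hard direction must project a \Dtype $B$-gap witness of $\cA'$ back to a \Utype $B$-gap witness of $\cA$, and this is where the real work lies: one has to rule out \emph{spurious} \Dtype witnesses of $\cA'$ — those arising from split points inside the first phase (which would merely re-encode ordinary \Dtype gaps of $\cA$, present whenever $\cA$ fails to be determinisable) or inside the second phase, and, most importantly, those in which the minimal reference run and the certifying run decouple by choosing unrelated suffixes. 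A further subtlety is the side condition of \cref{def:U type B gap witness}: the prefix minimum must be \emph{attained} by an accepting-continuable run, so the construction must guarantee that the value it exposes is simultaneously the true global minimum on $x$ and realised by a run co-accepting the specific $y$.

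I expect the main obstacle to be precisely this coupling. Because a \Dtype witness imposes no constraint on the continuation of its minimal run (by trimness $q_1$ reaches $F$ along \emph{some} word), any naive separator or product construction suffers from a ``diagonal escape'', in which the cheapest prefix run is certified against an arbitrary, unrelated suffix and the gap collapses back to the ordinary \Dtype gap of $\cA$. Engineering $\cA'$ so that its only unbounded \Dtype gaps are those in which the minimal run provably co-accepts the \emph{same} suffix $y$ used by $\rho$, while introducing no spurious gaps elsewhere, is the crux of the proof; once it is in place, the two gap characterisations together with the decidability of determinisability~\cite{almagor2025determinization} close the argument.
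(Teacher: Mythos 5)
You have correctly identified the right high-level strategy (route the reduction through the two gap characterisations, \cref{thm:unambig iff bounded U gap} and \cref{thm:det iff bounded gap}) and, via \cref{rmk:Utype vs Dtype}, the exact obstacle: forcing the minimal prefix run of a \Dtype witness to co-accept the \emph{same} suffix $y$. But the proposal stops short of a proof --- you explicitly leave the crux (``engineering $\cA'$ so that its only unbounded \Dtype gaps are those in which the minimal run provably co-accepts the same suffix'') unresolved, and the separator-based design you sketch does not resolve it. Two concrete failure points: first, a \Dtype witness of $\cA'$ may place its split point anywhere, not at $\sep$, so a first phase that ``simulates $\cA$ faithfully'' re-imports the ordinary \Dtype gaps of $\cA$ wholesale; second, a certifying run launched at $\sep$ ``from a nondeterministically chosen state'' is not tied to the run that attains $\minweight(q_0\runsto{x}Q)$ --- the prefix minimum in \cref{def: det B gap witness} ranges over \emph{all} runs on $x$ with no acceptance constraint, and bolting a parallel accepting component onto the suffix does not change which prefix runs are minimal. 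This is precisely the ``diagonal escape'' you name, and naming it is not the same as closing it.

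The missing idea, which is the paper's construction, is to distribute the certification over \emph{every letter} rather than concentrate it at a separator: enlarge the alphabet to $\Sigma\times\update$ so that each input letter carries an update $\alpha\in\{\bot,\trim,\keep\}^{Q\times Q}$ declaring, for every transition of $\cA$, whether it lies on an accepting run of the annotated run-DAG; and enrich states to $Q\times\commit$ with a commitment $f\in\{\bot,\trim,\keep\}^Q$ that is updated \emph{deterministically} from $\alpha$ subject to consistency checks (the updates must match $\Delta$, and $\keep$ must propagate along incoming/outgoing edges). Then for an arbitrary split point of any \Dtype witness of $\cB$: trimness forces the minimal run to sit at a state $(q,f_q)$ with $f_q(q)=\keep$, the commitments along the suffix are determined by the suffix's own annotations, and incoming consistency propagated backwards from the accepting state yields $q\runsto{y|_\Sigma}q_\fin$ --- so the minimal run co-accepts exactly the suffix used by $\rho$, converting the \Dtype witness into a \Utype one (\cref{lem:Dtype in B to Utype in A}); the converse lifting (\cref{lem:Utype in A to Dtype in B}) is routine via \cref{prop: redcution correspondence}. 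Your proposal would need either this per-letter annotation device or a genuinely different mechanism of equal strength before it constitutes a proof.
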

Before delving into the proof, we give some intuition.
Consider a WFA $\cA$. We wish to construct from $\cA$ a WFA $\cB$ such that $\cA$ is \unable if and only if $\cB$ is determinisable. 
In light of \cref{rmk:Utype vs Dtype}, we actually aim that every \Utype gap $B$-witness for $\cA$ induces a \Dtype $B$-gap witness for $\cB$, and that $\cB$ does not have any \Dtype $B$-gap witnesses that are not also \Utype. 
The former requirement is easy -- all we need to do is maintain enough of the structure of $\cA$ so as not to cause too much havoc (i.e., maintain the \Utype witnesses, which are already also \Dtype).

Making sure there are no further \Dtype witnesses in $\cB$ is the challenging part. To achieve this, we essentially ``prune'' the runs of $\cA$ as follows. At each state of $\cB$, we maintain a \emph{commitment}, which is a function $f$ that describes for every state $q\in Q$ whether $q$ is going to reach the accepting state ($f(q)=\keep$), whether $q$ is going to reach some states, but not the accepting state ($f(q)=\trim$), or whether $q$ is unreachable ($f(q)=\bot$). Then, with each letter we also receive an \emph{update} function $\alpha$ which states for every \emph{transition} whether it is along an accepting run ($\keep$), only along non-accepting runs ($\trim$), or unavailable ($\bot$). The commitments are updated deterministically, and must correctly follow the run DAG of $\cA$ on the word. Here, the run DAG is the layered graph whose $i$-th layer contains the states reachable after the first $i$ letters, and whose edges are the transitions used between consecutive layers.
The idea is then that in a \Dtype witness in $\cB$, the ``lower'' run $\chi$ on $x$ must be extendable to an accepting run on $xy$, since the updates given by $y$ dictate that there is such an extension. Thus, we can convert a \Dtype witness to a \Utype one.

We prove \cref{thm:main reduction} in the remainder of the section, starting with the construction. 
\subsubsection{The Reduction Construction}
\label{cref:reduction unambig to det construction}
Consider a WFA $\cA=\tup{Q,\Sigma,q_0,\Delta,F}$. We assume (based on \cref{rmk: initial and final weights}) that $F=\{q_\fin\}$ is the unique accepting state of $\cA$.
We obtain from $\cA$ a WFA $\cB=\tup{S,\Gamma,s_0,\Lambda,G}$ such that $\cA$ is \unable if and only if $\cB$ is determinisable. 
We start with some auxiliary definitions before describing $\cB$. 
Consider the set $\commit=\{\bot,\trim,\keep\}^Q$. We refer to each $f\in \commit$ as a \emph{commitment}, which intuitively prescribes to each state whether it is unreachable ($\bot$), reachable and is along an accepting run ($\keep$) or reachable but not along an accepting run ($\trim$).

Next, consider the set $\update=\{\bot,\trim,\keep\}^{Q\times Q}$. We refer to each $\alpha\in \update$ as an \emph{update}, which intuitively prescribes to each $p,q\in Q$ whether the transition from $p$ to $q$ is not available ($\bot$), is available along an accepting run ($\keep$) or is available but not along an accepting run ($\trim$). We abbreviate and write $p\bot q\in \alpha$, $p\keep q\in \alpha$, $p\trim q\in \alpha$ to signify these three cases, respectively. We illustrate the construction in \cref{fig:reduction example}.
\begin{figure}[ht]
\centering
\begin{subfigure}[b]{0.3\textwidth}
\centering
\begin{tikzpicture}[
    ->, >=stealth',
    auto,
    semithick,
    every state/.style={circle, draw, minimum size=16pt, inner sep=1pt}
  ]

  \node[state,initial, initial text={}] (q) at (0,0) {$q$};
  \node[state] (p) at (2,0.9) {$p$};
  \node[state] (r) at (2,-0.9) {$r$};
  \node[state, accepting] (s) at (4,0) {$s$};

  \draw (q) -- node[above left] {$a,0$} (p);
  \draw (q) -- node[below left] {$a,0$} (r);

  \draw (p) edge[loop above, looseness=7, in=120, out=60] node[pos=0.25, right] {$b,1$} (p);
  \draw (r) edge[loop below, looseness=7, in=300, out=240] node[pos=0.75, right] {$b,0$} (r);

  \draw (p) -- node[above right] {$c,0$} (s);
  \draw (r) -- node[below right] {$d,0$} (s);

\end{tikzpicture}

\caption{An unambiguous WFA $\cA$.}
\label{fig:unambiguous for reduction}
\end{subfigure}

\begin{subfigure}[b]{0.6\textwidth}
\centering
\begin{tikzpicture}[
    ->, >=stealth',
    auto,
    semithick,
    node distance=3cm,
    every node/.style={font=\small},
    box/.style={draw, rounded corners=6pt, inner sep=4pt}
]

\node[box,initial,initial text={}] (S0) at (0,0) {$q,\; q\!\keep$};

\node[box] (T1) at (3.5, 0.7)
  {$p,\begin{aligned}
      p &\keep \\[-5pt]
      r &\trim
    \end{aligned}$};

\node[box] (T2) at (3.5, -0.7)
  {$r,\begin{aligned}
      p &\trim \\[-5pt]
      r &\keep
    \end{aligned}$};

\node[box,accepting] (S3) at (7.5, 0)
  {$s,\; s\!\keep$};

  \draw (S0) -- node[pos=0.25, above=7pt]
    {$\left( a, \begin{aligned}
      q &\keep p \\[-5pt]
      q &\trim r
    \end{aligned}\right),\; 0$}
    (T1);

  \draw (S0) -- node[pos=0.25, below=7pt]
    {$\left( a, \begin{aligned}
      q &\trim p \\[-5pt]
      q &\keep r
    \end{aligned}\right),\; 0$}
    (T2);

    \draw (T1) edge[loop above]
      node[pos=0.75, right]
      {$\left( b, \begin{aligned}
        p &\keep p \\[-5pt]
        r &\trim r
      \end{aligned}\right),\; 1$} (T1);

  \draw (T2) edge[loop below]
    node[pos=0.25, right]
      {$\left( b, \begin{aligned}
        p &\trim p \\[-5pt]
        r &\keep r
      \end{aligned}\right),\; 0$} (T2);

  \draw (T1) -- node[pos=0.75, above=7pt]
    {$(c, p\!\keep s),\; 0$}
    (S3);

  \draw (T2) -- node[pos=0.75,below=7pt]
    {$(d, r\!\keep s),\; 0$}
    (S3);

\end{tikzpicture}

\caption{The (deterministic) reduction output $\cB$ WFA.}
\label{fig:deterministic reduction output}
\end{subfigure}

\caption{The reduction of \cref{thm:main reduction}. The WFA $\cA$ is \unable (as it is already unambiguous). The reduction output $\cB$ adds the commitments and updates to the transitions. $\bot$ markings are omitted for clarity. 
For example, in order to take the $q\to p$ transition, the commitment specifies that $p$ leads to an accepting state, \emph{and} that $r$ does not, thus fixing an explicit run DAG.
Note that $\cB$ is determinisable (as it is already deterministic).} 
\label{fig:reduction example}
\end{figure}

We now turn to define $\cB$. The states are  $S=Q\times \commit$. That is, each state is a pair $(q,f)$ where $q\in Q$ and $f\in \commit$. 
The alphabet is $\Gamma=\Sigma\times \update$. That is, at each transition $\cB$ reads a letter $\sigma\in \Sigma$ as well as an update $\alpha\in \update$. 
The initial state is $s_0=(q_0,f_0)$ where $f_0\in \commit$ is the commitment $f_0(q_0)=\keep$ and $f_0(p)=\bot$ for every $p\neq q_0$.
The accepting states are 
\[G=\{(q_\fin, f_\fin)\mid f_\fin(q_\fin)=\keep\wedge f_\fin(p)\neq \keep \text{ for every } p\neq q_\fin\}\]
The transitions $\Lambda$ are as follows.
Consider two states $(q,f),(p,g)\in S$ and a letter $(\sigma,\alpha)\in \Gamma$. We have $((q,f),(\sigma,\alpha),c,(p,g))\in \Lambda$ if and only if the following consistency conditions hold.
\begin{itemize} 
      \item \textbf{$\Delta$-consistency:} $(q,\sigma,c,p)\in \Delta$ (i.e., the projection to $\cA$ is a valid transition with the same weight).
    \item \textbf{Update consistency:} for every $r,t\in Q$ we have 
    $(r,\sigma,\infty,t)\in \Delta$ if and only if $r\bot t\in \alpha$. 
    Equivalently, $\minweight(r\runsto{\sigma}t)\neq \infty$ if and only if $r\keep t\in \alpha$ or $r\trim t\in \alpha$. That is, the update $\alpha$ correctly reflects the available transitions on $\sigma$, marking them with $\keep$ and $\trim$. 
    Note that this condition depends only on the letter $(\sigma,\alpha)$, not on the states.
    \item \textbf{Outgoing consistency:} for every $r\in Q$ we have:
    \begin{itemize}
        \item If $f(r)=\trim$ then for every $r'\in Q$ we have $r\keep r'\notin \alpha$ (i.e., outgoing edges from $\trim$ states are marked $\bot$ or $\trim$).
        \item If $f(r)=\keep$ then there exists $r'\in Q$ such that $r\keep r'\in \alpha$.
    \end{itemize}
    
    \item \textbf{Incoming consistency:} for every $r'\in Q$ we have:
    \begin{itemize}
        \item $g(r')=\keep$ if there exists $r\in Q$ such that $f(r)=\keep$ and $r\keep r'\in \alpha$, and for every $r\in Q$ if $r\trim r'\in \alpha$ then $f(r)= \bot$.
        \item $g(r')=\trim$ if there exists $r\in Q$ such that $f(r)\neq \bot$ and $r\trim r'$.
    \end{itemize}
\end{itemize}




Intuitively, at each state $(q,f)$ $\cB$ commits to certain states (of $\cA$) leading to $q_\fin$, and others not leading to $q_\fin$. Then, $\cB$ reads a letter $(\sigma,\alpha)$ where $\alpha$ describes exactly the available transitions on $\sigma$. The state component $q$ is updated nondeterministically according to $\sigma$ in $\cA$. The commitment is updated \emph{deterministically} according to $f$ and $\alpha$: outgoing consistency checks that the old commitment can be extended correctly, i.e., that $\keep$ transitions reach some $\keep$ state and $\trim$ states do not admit $\keep$ transitions, and incoming consistency uniquely determines whether each state in the next layer is marked $\keep$, $\trim$, or $\bot$.

At a higher-level, $\cB$ essentially reads a word along with a specific run-DAG on it, where some runs are marked ``trimmed'' ($\trim$), which intuitively means that they do not lead to accepting states.
The detailed correctness proof is in~\cref{apx:reduction correctness}. We outline the ideas here.

The first step is to show a correspondence between $\cA$ and $\cB$. For a word $w\in \Gamma^*$, we denote by $w|_\Sigma$ its projection on $\Sigma^*$. Similarly, for a run $\rho$ of $\cB$ we denote by $\rho|_Q$ its projection on $Q$. We show in \cref{prop: redcution correspondence} that a run of $\cB$ can be projected to a run of $\cA$ by removing the commitments and updates, and conversely -- a run of $\cA$ can be lifted to a run of $\cB$ by providing exactly the correct commitments and updates from the run DAG of $\cA$. Moreover, this correspondence maintains the weights of the runs.
We now proceed to show correctness. 

 \paragraph*{$\cA$ is Not Unambiguisable $\implies$ $\cB$ is Not Determinisable}
We prove this direction via the gap characterisation (\cref{thm:unambig iff bounded U gap,thm:det iff bounded gap}). Specifically, we prove that if there is a \Utype $B$-gap witness in $\cA$, then there is a \Dtype $B$-gap witness in $\cB$. This follows easily from the correspondence above: any \Utype $B$-gap witness induces a \Utype $B$-gap witness in $\cB$, and this is in particular a \Dtype witness (\cref{rmk:Utype vs Dtype}).

 \paragraph*{$\cA$ is Unambiguisable $\implies$ $\cB$ is Determinisable}
We turn to the ``hard'' direction. We again use gap witnesses, this time showing that every \Dtype $B$-gap witness in $\cB$ induces a \Utype $B$-gap witness in $\cA$. We first assume without loss of generality that $\cB$ is trim. However, we remark that this is an important assumption that is treated carefully in the proof.
Consider therefore a \Dtype $B$-gap witness $xy$ in $\cB$, with the corresponding runs $\rho:(q_0,f_0)\runsto{x}(p,f_p)\runsto{y}(q_\fin,g)$ and $\chi:(q_0,f_0)\runsto{x}(q,f_q)$ (where $\chi$ is minimal on $x$, and $\rho$ is minimal on $xy$).
We claim that $xy|_\Sigma$ is a \Utype $B$-gap witness in $\cA$. Note that this almost holds by \cref{rmk:Utype vs Dtype}, and the only thing left to show is that $\chi|_Q$ can be extended to some run on $y|_\Sigma$. 

This is where the consistency requirements in the construction of $\cB$ come into play. First, we observe that $f_p(p)=f_q(q)=\keep$. Indeed, if, for example, $f_q(q)\neq \keep$, then the commitment reached so far forces that $q$ cannot proceed to an accepting state. But then the state $(q,f_q)$ cannot reach any accepting state on any word, as any accepting continuation would violate the outgoing and incoming consistency requirements. This contradicts the assumption that $\cB$ is trim. Thus, since $\cB$ is trim, such a state cannot occur in a witness.

Next, observe that since the commitment component is updated deterministically, we have $f_p=f_q$. In particular, $f_p(q)=f_p(p)=\keep$. Intuitively, this means that all the runs have the same opinion on whether $q$ reaches an accepting state, and this opinion is $\keep$. We can then inductively follow the update consistency on $y$ from $(q,f(q))$, and we are guaranteed that there is some run from $q$ on $y$ that reaches an accepting state, as required.

We remark that a concerned reader may wonder why we place so much emphasis on a simple property such as being trim. This is in a way the key to the proof: the construction of $\cB$ is such that if a state is not trimmed, then it can reach an accepting state on \emph{every} upcoming suffix (provided there is at least one other run that can read this suffix). \hfill \qed

This concludes the proof of \cref{thm:main reduction}. Since determinisability is decidable by~\cite{almagor2026determinization}, we have the following.
\begin{restatable}{corollary}{unambigdecidable}
    \label{cor:unambiguisability is decidable}
    The Unambiguisability problem for WFA is decidable.
\end{restatable}

Finally, we remark that currently there are no known complexity upper bounds for determinisability, and therefore our reduction does not provide complexity bounds either. It should be noted, however, that the reduction has a single-exponential blowup in the state space. Once complexity bounds for determinisation are established, it would be interesting to see if this blowup is necessary, or whether there is a polynomial-time (or indeed -- logspace) reduction.
\begin{theorem}
\label{thm:unambiguisability pspace hard}
    WFA Unambiguisability is PSPACE-hard.
\end{theorem}
\begin{proof}
The PSPACE-hardness proof of determinisation in~\cite[Appendix D]{almagor2026determinization} actually uses \Utype witnesses, not just \Dtype witnesses. It therefore works word-for-word to show that \unability is also PSPACE-hard.
\end{proof}

\section{Minimising Registers in Cost Register Automata}
\label{sec: CRA register minimization}
Tropical cost register automata (CRAs) with linear register updates provide an alternative representation of WFAs, where nondeterminism is captured in the behaviour of several registers, keeping the control deterministic. This view offers a natural measure of nondeterminism by the number of registers needed to capture a function. 
In this section we show that unfortunately, minimising the number of registers is generally undecidable already for inputs with $7$ registers, and therefore also for any larger fixed number of registers. 

A fundamental result in~\cite{alur2013regular} is that CRAs are expressively equivalent to WFAs. In~\cref{apx:CRA} we formally define CRAs, and refine this result by showing that CRAs with $k$ registers are equivalent to WFAs in which, for every word, the maximal number of states simultaneously reachable is $k$. We refer to these as \emph{width-$k$} WFAs.
Thus, our result is the following (see also \cref{cor:CRA register minimisation undecidable}). The proof is in \cref{apx:sec: width reduction undecidable}, and we bring here only the core idea.
\begin{restatable}{theorem}{widthund}
    \label{thm:width reduction undecidable}
    The following problem is undecidable already for $k=7$: given a width $k$ WFA, decide whether there is an equivalent width $k-1$ WFA. 
\end{restatable}

\begin{figure}[ht]
\centering
\begin{tikzpicture}[
    ->, >=stealth',
    auto,
    semithick,
    node distance=1.8cm,
    every state/.style={circle, draw, minimum size=20pt, inner sep=1pt},
    every loop/.style={looseness=4, min distance=25pt}
  ]

  \node[state,initial above,initial text={}] (qa) at (-3,0.3) {$q_a$};
  \node[state,initial above,initial text={}] (q0) at (0,0.3) {$q_0$};

  \node[state] (c1)   at (2,-0.3) {$q_{\Cl{1}}$};
  \node  (dots) at (4.7,-0.2) {$\ldots$};
  \node[state] (c6)   at (6,-0.3) {$q_{\Cl{6}}$};

  \draw[rounded corners] (-1.6,0.9) rectangle (0.6,-0.9);
  \node at (-1.3,0.6) {$\cA$};

  \draw[->] (qa) edge[in=210, out=150, looseness=6]
    node[pos=0.5, left, align=center] {$\sigma,0$} (qa);

  \draw[->, dotted] (q0) edge[in=220, out=270, looseness=7]
    node[pos=0.7,left=-1pt] {$ w,1 $} (q0);

  \draw[->] (q0) to[bend right=10] node[pos=0.6, above] {$\sep,0$} (c1);
  \draw[->] (q0) to[bend right=30] node[pos=0.2, below] {$\sep,0$} (c6);

  \draw[->] (c1) edge[in=-30, out=30, looseness=6]
    node[align=center, right] {$\begin{aligned} 
      \Cl{1},&-1\\[-4pt] 
      \sigma,&0 
    \end{aligned}$} (c1);

  \draw[->] (c6) edge[in=-30, out=30, looseness=6]
    node[align=center, right] {$\begin{aligned} 
      \Cl{6},&-1\\[-4pt] 
      \sigma,&0 
    \end{aligned}$} (c6);

    \node (X1) at ($(c1)+(0,1.3)$) {\Large$\times$};
\draw[->] (c1) -- node[right] {$\xCl{1}$} (X1);

\node (X6) at ($(c6)+(0,1.3)$) {\Large$\times$};
\draw[->] (c6) -- node[right] {$\xCl{6}$} (X6);

\node (Xa) at ($(qa)+(0,-1.1)$) {\Large$\times$};
\draw[->] (qa) -- node[right] {$a$} (Xa);

\end{tikzpicture}
\caption{Reduction idea. The letter $\sigma$ represents all ``non-killing'' letters in the transitions. Killing letters lead to $\times$.
Intuitively, we start either at $q_a$ or at $q_0$ (in $\cA$). In $q_a$ there is a self loop with weight $0$ on everything except $a$. From $q_0$ we can run as $\cA$, but also move to the $\Cl{i}$ components on $\sep$. In state $q_{\Cl{i}}$ we self loop with weight $-1$ on $\Cl{i}$, and with $0$ on all other letters except $\xCl{i}$. There may be a self loop on $q_0$ with the word $w$.}
\label{fig:reduction 7 width core}
\end{figure}

Consider a WFA $\cA$ of width $6$, such that either $\cA(w)\le 0$ for all $w$, or there is a word $w$ such that $\minweight(q_0\runsto{w}q_0)=1$, so that\footnote{This actually requires further assumptions, see the detailed proof.} the weight of $w^n$ is $n$. We want to separate the two cases using a reduction to the width-minimisation problem. 
We obtain from $\cA$ a new WFA $\cA'$ as depicted in \cref{fig:reduction 7 width core}. We add to the alphabet the letters $\{\sep,\Cl{1},\ldots,\Cl{6},a,\xCl{1},\ldots,\xCl{6}\}$ and introduce new states $q_a, q_{\Cl{1}},\ldots, q_{\Cl{6}}$. The behaviour is the following: we start both at $q_a$ and at $q_0$ (in $\cA$). When $\sep$ is read, runs from $q_0$ leave to all the $q_{\Cl{i}}$. There, $q_{\Cl{i}}$ loses $1$ on $\Cl{i}$, and is not allowed to read $\xCl{i}$. In addition component $q_a$ maintains weight $0$ but cannot read $a$.

The intuitive idea is the following. First, we claim that in any WFA equivalent to $\cA'$, the states $q_{\Cl{i}}$ must be tracked separately, leading to width $6$ at least. This is because upon reading the word e.g., $\Cl{1}^{n}\Cl{2}^{2n}\Cl{3}^{3n}\Cl{4}^{4n}\Cl{5}^{5n}\Cl{6}^{6n}$ for large $n$, the gaps between the runs on the $q_{\Cl{i}}$ states are very large. Then, we can use the ``killing letters'' $\xCl{i}$ to eliminate all the runs but one, with a very short suffix (e.g., the suffix $\xCl{1}\xCl{3}\xCl{4}\xCl{5}\xCl{6}$ leaves only the $q_{\Cl{2}}$ run). For large enough $n$, a WFA of width 5 cannot track these values correctly. 

Next, we look at the $q_a$ component. If $\cA(w)\le 0$ for all $w$, the component $q_a$ is redundant, so we have an equivalent width-6 WFA. Otherwise, we claim that any equivalent WFA needs to track $q_a$ separately to all the $q_{\Cl{i}}$. Here the reason is that using $w^n$ we can reach weight $n$ in the $\cA$ component (so $q_a$ is far below it, with weight $0$), but upon reading $a$ the $q_a$ component is killed and the weight suddenly jumps to $n$, which again requires another component in any equivalent WFA. Thus, in this case any equivalent WFA must be of width at least $7$.

In order to turn this construction into a proper reduction we show that we can obtain such a WFA $\cA$ for which the separation described above is undecidable. This uses the upper-boundedness construction of~\cite{Almagor2020Whatsdecidableweighted}, which reduces the $0$-halting problem for two-counter machines to deciding whether the function described by a WFA is bounded from above. The detailed proof in \cref{apx:sec: width reduction undecidable} states the precise properties of that construction and then applies the gadget described here.

\section{Discussion and Future Research}
\label{sec:discussion}
In a nutshell, our work maps out the borders of ``nondeterminism minimisation'' in WFAs, showing on the positive side that \unability is decidable, and on the negative side that reducing the width (equivalently -- minimising the number of registers in a CRA) is undecidable.

Note that our results hold for the closely related settings such as max-plus WFAs over $\bbZ_\infty$ and min-plus or max-plus WFAs $\bbN_\infty$. For max-plus over $\bbZ_\infty$, the setting is completely symmetric -- simply negate all the weights to obtain an analogous min-plus WFA.
For $\bbN_\infty$, note that adding a constant to all the transitions retains unambiguisability and determinisability (for both max-plus and min-plus). Therefore, we can start with a $\mathbb{Z}_\infty$ WFA and increase the weights so that they are all in $\mathbb{N}_\infty$, and the results stil hold. Note that "still holds" means that the general problem and the $\mathbb{N}_\infty$ problem are computationally equivalent. 
The decidability already holds because this is a sub-case.

Our decidability proof of \unability relies on the decidability of WFA determinisability. In particular, the current best known complexity bounds for determinisability are in the 6th level of the fast-growing hierarchy~\cite{almagor2026complexity}. Since our reduction has a single-exponential blowup (and the fast-growing hierarchy is closed under single-exponential blowups) then the same complexity bounds hold for \unability as well. For the lower bound, the best current bound is PSPACE-hardness, as per \cref{thm:unambiguisability pspace hard}.

Two natural questions arise from our research. First, can we decide more relaxed ambiguity? E.g., can we decide if a given WFA has an equivalent 2-ambiguous/finitely ambiguous/polynomially-ambiguous WFA? 
The question of 2-ambiguisability seems very difficult, and currently out of reach. In particular, we do not know of a gap criterion that corresponds to 2-ambiguous WFAs. 
The second question is whether register minimisation becomes decidable for $k<7$, which is perhaps of lesser importance, but it would nonetheless be nice to complete the picture.

In addition, now that some borders on decidability are in place, we can map out fragments, e.g., register minimisation for \emph{copyless} CRAs~\cite{Almagor2020WeakCostRegister}.

\bibliography{main}

\pagebreak
\appendix

\section{No Need for Initial and Final Weights}
\label{sec:no need for init and fin and acc}

\newcommand{\slet}{\mathsf{s}}
\newcommand{\flet}{\mathsf{f}}
\newcommand{\weightif}{\weight^{+\mathsf{if}}}
\newcommand{\minweightif}{\mathsf{m}\weight^{+\mathsf{if}}}
\newcommand{\WFAif}{WFA$_\text{i-f}$\xspace}
A \emph{$(\min,+)$ Weighted Automaton with initial and final weights} (\WFAif for short) is a tuple $\cA= \tup{Q,\Sigma, \init, \Delta ,\fin}$ with the same components as a WFA, except that 
$\init,\fin\in \bbZ_\infty^Q$ are $Q$-indexed vectors denoting for each state its \emph{initial weight} and \emph{final weight}, respectively. 

For a run $\rho:p\runsto{x}q$ in a \WFAif, the sum of its weights along the transitions is denoted $\weight(\rho)$ (similarly to WFAs). We also use the following notation for \WFAif: 
$\weightif(\rho)=\init(p)+\weight(\rho)+\fin(q)$ and $\minweightif(Q_1\runsto{x} Q_2)=\min\{\weightif(\rho)\mid \rho:Q_1\runsto{x}Q_2\}$.

A run $\rho$ is accepting if $\weightif(\rho)<\infty$.
A \WFAif is unambiguous if for every $w$ there is at most one accepting run.

In this section we show that the \unability problem for \WFAif reduces to the \unability problem for our model, which only has one initial state, and has no final weights (and in particular $\init,\fin\in \{0,\infty\}^Q$). In addition, we can also assume a single accepting state.

The intuition for getting rid of the initial and final weights is simple: we add two letters $\{\slet,\flet\}$ to the alphabet, to denote the start and the finish of a word, respectively. Upon reading $\slet$ and $\flet$, the automaton incurs the weights described by the initial and final states, respectively.

\begin{lemma}
    \label{lem:init final weight reduces to no init and final}
    The \unability problem for \WFAif is reducible (in logarithmic space) to the \unability problem for WFAs. Moreover, we can assume the WFA has a single accepting state.
\end{lemma}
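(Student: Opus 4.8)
The plan is to encode the initial and final weight vectors as transitions on two fresh boundary letters, marking the start and the end of a word. Given a \WFAif $\cA=\tup{Q,\Sigma,\init,\Delta,\fin}$, I would build a WFA $\cA'$ over $\Sigma'=\Sigma\cup\{\slet,\flet\}$ whose state set is $Q\cup\{q_s,q_f\}$, with single initial state $q_s$ and single accepting state $q_f$. The transitions are: $(q_s,\slet,\init(p),p)$ for every $p\in Q$; the original transitions of $\Delta$ on letters of $\Sigma$; and $(q,\flet,\fin(q),q_f)$ for every $q\in Q$. Every remaining transition gets weight $\infty$, so that $q_s$ has no finite move inside $Q$, no finite $\slet$-move back into $Q$, and $q_f$ is a sink. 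The purpose of this shape is the identity $\cA'(\slet w\flet)=\cA(w)$ for all $w\in\Sigma^*$ (the right-hand side read as the i-f value, $\minweightif(Q\runsto{w}Q)$), while every word of $\Sigma'^*$ not of the form $\slet w\flet$ receives value $\infty$ uniformly. The key structural observation is that the accepting runs of $\cA'$ on $\slet w\flet$ are in bijection with the accepting runs of $\cA$ on $w$: each such run of $\cA'$ is forced to have the shape $q_s\runsto{\slet}p\runsto{w}q\runsto{\flet}q_f$, and the $\slet$- and $\flet$-steps are uniquely determined by $p$ and $q$. The map $\cA\mapsto\cA'$ is clearly logspace, and $\cA'$ has a single initial and a single accepting state, which already delivers the ``moreover'' clause.

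For the forward direction, suppose $\cA$ is \unable, witnessed by an unambiguous \WFAif $\cU$ with $\cU\equiv\cA$. I would apply the identical construction to $\cU$ to obtain a WFA $\cU'$. Since the construction preserves the function through the boundary letters, both $\cU'$ and $\cA'$ assign $\cU(w)=\cA(w)$ to $\slet w\flet$ and $\infty$ elsewhere, so $\cU'\equiv\cA'$. Applying the run bijection above to $\cU$ turns ``at most one accepting run of $\cU$ on $w$'' into ``at most one accepting run of $\cU'$ on $\slet w\flet$'', while all other words carry no accepting run of $\cU'$; hence $\cU'$ is unambiguous and $\cA'$ is \unable.

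The converse direction is where I expect the real work, and it is the main obstacle. Suppose $\cA'$ is \unable, witnessed by an unambiguous WFA $\cB$ with $\cB\equiv\cA'$, with single initial state $s_0$ and accepting set $G$. I would reconstruct an unambiguous \WFAif $\cC$ over $\Sigma$ by ``pre-reading'' $\slet$ into the initial weights and ``post-reading'' $\flet$ into the final weights: keep the states and the $\Sigma$-transitions of $\cB$, and set $\init_\cC(t)=\minweight_\cB(s_0\runsto{\slet}t)$ and $\fin_\cC(t)=\minweight_\cB(t\runsto{\flet}G)$ (possibly $\infty$). A direct computation gives $\cC(w)=\minweight_\cB(s_0\runsto{\slet w\flet}G)=\cB(\slet w\flet)=\cA(w)$, so $\cC\equiv\cA$. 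The delicate point is the unambiguity of $\cC$. An accepting run of $\cC$ on $w\in\Sigma^*$ is a finite-weight run $\rho:t\runsto{w}t'$ with $\init_\cC(t)<\infty$ and $\fin_\cC(t')<\infty$; finiteness of these endpoints guarantees a finite $\slet$-step $s_0\runsto{\slet}t$ and a finite $\flet$-step $t'\runsto{\flet}g$ in $\cB$, so $\rho$ lifts to an accepting run $s_0\runsto{\slet}t\runsto{w}t'\runsto{\flet}g$ of $\cB$ on $\slet w\flet$. Two distinct accepting runs of $\cC$ on $w$ already differ in their middle $\Sigma$-segment, hence their lifts are two distinct accepting runs of $\cB$ on $\slet w\flet$, contradicting the unambiguity of $\cB$. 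Therefore $\cC$ is unambiguous and $\cA$ is \unable.

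Combining the two directions shows $\cA$ is \unable if and only if $\cA'$ is \unable, establishing the logspace reduction and the single-accepting-state strengthening. I expect the only genuinely subtle step to be the unambiguity argument for $\cC$: one must be careful that the min-based initial and final weights do not collapse distinct runs, which is exactly why the lifting is taken through the uniquely-shaped $\slet\cdots\flet$ runs of $\cB$.
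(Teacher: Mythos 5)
Your proposal is correct and follows the paper's overall strategy: encode $\init$ and $\fin$ as transitions on fresh boundary letters $\slet,\flet$ with a fresh source and sink, observe $\cA'(\slet w\flet)=\cA(w)$, and transfer unambiguity in both directions via the run correspondence. The forward direction is identical to the paper's. The one genuine difference is in the converse: the paper first normalises the unambiguous WFA equivalent to $\cA'$ (arguing that, up to degenerate cases, its initial state has no incoming transitions and that $\flet$ leads to a unique sink with no outgoing transitions) and then reads the initial and final weights off the resulting unique $\slet$- and $\flet$-transitions, whereas you keep the entire state set and define $\init_\cC(t)=\minweight_\cB(s_0\runsto{\slet}t)$ and $\fin_\cC(t)=\minweight_\cB(t\runsto{\flet}G)$ as minima. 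Your version avoids the WLOG structural assumptions entirely, and your unambiguity argument for $\cC$ (distinct accepting runs of $\cC$ have distinct $\Sigma$-segments, hence lift to distinct accepting runs of $\cB$ on $\slet w\flet$) is sound because finiteness of the endpoint weights guarantees at least one finite $\slet$-step and one finite $\flet$-step to lift through. The trade-off is minor: the paper's normalisation yields a smaller, more explicit \WFAif, while your min-based extraction is more robust and requires no case analysis of degenerate automata.
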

\begin{proof}
Consider a \WFAif $\cA= \tup{Q,\Sigma, \init, \Delta ,\fin}$, and assume $\cA$ is trim (otherwise remove states that are not reachable from a state with finite initial weight or not co-reachable from a state with finite final weight). 

We construct a WFA $\cB= \tup{Q\cup \{s_0,s_f\},\Sigma\cup \{\slet,\flet\}, s_0, \eta ,F }$ where $s_0,s_f\notin Q$, $F=\{s_f\}$ and $\slet,\flet\notin \Sigma$. The transitions are defined as follows:
    \begin{itemize}
        \item For $p,q\in Q$ and $\sigma\in \Sigma$ we have $(p,\sigma,c,q)\in \eta$ iff $(p,\sigma,c,q)\in \Delta$.
        \item For $q\in Q$ we have $(s_0,\slet,\init(q),q)\in \eta$ and $(q,\flet,\fin(q),s_f)\in \eta$.
        \item The remaining transitions are with weight $\infty$.
    \end{itemize}
    This construction can clearly be implemented in logarithmic space.
    We prove that $\cA$ is \unable if and only if $\cB$ is \unable.

    Observe that by the construction of $\cB$, we have for every word $w\in \Sigma^*$ that
    \begin{equation}
    \label{eq:init fin weight equiv}
    \cA(w)=\cB(\slet \cdot w\cdot  \flet)    
    \end{equation}

     \paragraph*{If $\cA$ is \unable, then $\cB$ is \unable}
    Assume $\cA$ is \unable, and let $\cU$ be an equivalent unambiguous \WFAif. Apply the construction above to $\cU$ and obtain a WFA $\cU'$. By \cref{eq:init fin weight equiv} and the equivalence of $\cA$ and $\cU$, for every word $w\in \Sigma^*$ we have
    \[{\cB}(\slet w\flet)={\cA}(w)={\cU}(w)={\cU'}(\slet w \flet)\]
    Moreover, if $w\notin \slet \cdot \Sigma^*\cdot \flet$, then ${\cB}(w)=\infty={\cU'}(w)$.
    It follows that $\cB$ is equivalent to $\cU'$. It remains to show that $\cU'$ is unambiguous. 
    Assume by way of contradiction that there is a word on which $\cU'$ has at least two accepting runs $\rho,\chi$. It must therefore hold that this word is of the form $\slet w \flet$ with $w\in \Sigma^*$. Then, $\rho$ and $\chi$ can be written as:
    \[
    \begin{split}
    \rho:s_0\runsto{\slet}q_1\runsto{w}q_n\runsto{\flet}s_f\\
    \chi:s_0\runsto{\slet}p_1\runsto{w}p_n\runsto{\flet}s_f\\
    \end{split}   
    \]
    Since $\chi\neq \rho$, it follows that there exists some $1\le i\le n$ such that $q_i\neq p_i$ (otherwise the runs are identical, since there are no two transitions with different weights between the same two states). Moreover, we have both $\fin(q_n)<\infty$ and $\fin(p_n)<\infty$ (in $\cU$), otherwise there would not be a transition to $s_f$.
    But then $q_1\runsto{w}q_n$ and $p_1\runsto{w}p_n$ are two accepting runs of $\cU$ on $w$, contradicting the fact that $\cU$ is unambiguous.

     \paragraph*{If $\cB$ is \unable, then $\cA$ is \unable}
    Assume $\cB$ is \unable and let $\cU=\tup{Q_\cU,\Sigma\cup\{\slet,\flet\}, q_0, \Delta_\cU,F}$ be an equivalent unambiguous trim WFA. 
    Note that we can assume $q_0$ has no incoming transitions (otherwise a word with more than one $\slet$ can have finite weight, unless all words have cost $\infty$, which is a degenerate case).
    Similarly, we can assume that once $\flet$ is read, a unique state $F=\{q_f\}$ is reached, from which there are no outgoing transitions (indeed, no final weights can be accumulated upon leaving $q_f$). We can assume $q_0\neq q_f$, otherwise the only accepted word in $\cA$ is $\epsilon$, which is again degenerate (since $q_f$ has no outgoing transitions).

    Define a deterministic \WFAif  $\cU'=\tup{Q'_{\cU},\Sigma, \init', \Delta'_\cU ,\fin'}$ as follows. The states are $Q'_\cU=Q_\cU\setminus \{q_0,q_f\}$.

    For every $q\in Q'_\cU$ define $\init'(q)=c$ where $c\in \bbZ$ is such that $(q_0,\slet,c,q)\in \Delta_{\cU}$, i.e., each state starts with the weight reached by reading $\slet$.
    Similarly, for the final vector, for every $q\in Q'_\cU$  set $\fin'(q)=c$ where $(q,\flet,c,q_f)\in \Delta_{\cU}$.
    
    For the remaining transitions, we have that $\Delta'_\cU=\Delta_{\cU}\cap (Q'_\cU\times \Sigma\times \bbZinf\times Q'_\cU)$ (i.e., we keep only transitions on $Q'_\cU$ and over $\Sigma$).

    We claim that $\cU'$ is an unambiguous \WFAif equivalent to $\cA$. Starting with the latter, we claim that ${\cU'}(w)={\cA}(w)$ for every $w\in \Sigma^*$. 
    By \cref{eq:init fin weight equiv}, it is enough to prove that ${\cU'}(w)=\cB(\slet\cdot w\cdot \flet)$, but the latter is immediate from the construction, since reading $\slet$ is simulated by the initial weights, and reading $\flet$ by the final weights. 

    Finally, we claim that $\cU'$ is unambiguous. Indeed, assume by way of contradiction that there is a word $w\in \Sigma^*$ on which $\cU'$ has at least two accepting runs $\rho,\chi$. We can therefore obtain two accepting runs of $\cU$ on $\slet w\flet$ by starting from $s_0$ and then following $\rho$ and $\chi$, and finally using $\flet$. The construction of $\cU'$ guarantees these are valid runs. This is a contradiction to the unambiguity of $\cU$.    
\end{proof}

\section{Basic Properties of Unambiguous WFAs}
\label{apx:basic properties of unambig}
We mention two basic results about unambiguous WFAs that are useful in the proof. The first concerns the \emph{negation} of WFAs. Consider a WFA $\cA=\tup{Q,\Sigma,q_0,\Delta,F}$. We obtain a new WFA denoted $\cA^-=\tup{Q,\Sigma,q_0,\Delta',F}$ by defining $\Delta'=\{(q,\sigma,-c,q')\mid (q,\sigma,c,q')\in \Delta\}$, i.e., we negate all the weights in $\cA$. In general, not much can be said about the functions described by $\cA^-$. However, if $\cA$ is unambiguous, then so is $\cA^-$, and for every accepted word, the value of its unique accepting run in $\cA$ is exactly negated in $\cA^-$. We therefore have the following.
\begin{proposition}
    \label{prop:negation of unambig}
    If $\cA$ is an unambiguous WFA, then $\cA^-$ is also an unambiguous WFA and for every word $w\in \Sigma^*$ we have either $\cA(w)=\cA^-(w)=\infty$ or $\cA(w)<\infty$ and $\cA^-(w)=-\cA(w)$.
\end{proposition}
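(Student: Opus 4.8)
The plan is to exploit the fact that $\cA$ and $\cA^-$ share exactly the same transition graph, differing only in the sign of the weights, so that they have literally the same runs on every word. First I would observe that negation is a bijection of $\bbZ$ onto itself and fixes the value $\infty$ that marks the absent transitions no run may use; hence a sequence of transitions $t_1,\dots,t_m$ is a run of $\cA$ on $w$ from $p$ to $q$ if and only if the corresponding sign-flipped sequence is a run of $\cA^-$ on $w$ from $p$ to $q$ (the side condition $c_i<\infty$ is preserved, since $c_i$ is finite iff $-c_i$ is). As the state set, initial state, and accepting set are unchanged, this yields a weight-ignoring bijection between the accepting runs of $\cA$ on $w$ and those of $\cA^-$ on $w$.

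From this bijection the unambiguity of $\cA^-$ is immediate: for every $w$, the number of accepting runs of $\cA^-$ on $w$ equals the number of accepting runs of $\cA$ on $w$, which is at most one by hypothesis. The same bijection shows that $\cA$ has an accepting run on $w$ if and only if $\cA^-$ does, settling the first case of the dichotomy: if there is no accepting run, then $\cA(w)=\cA^-(w)=\infty$.

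For the remaining case I would use unambiguity to collapse the minimisation. When $\cA$ has an accepting run on $w$, unambiguity makes it the unique accepting run $\rho=t_1,\dots,t_m$, so $\cA(w)=\weight(\rho)=\sum_{i=1}^m c_i$, which is finite since each $c_i$ is. The matching run $\rho^-$ in $\cA^-$ is likewise the unique accepting run, and its weight is $\sum_{i=1}^m(-c_i)=-\weight(\rho)$; as a minimum over a singleton equals its single element, $\cA^-(w)=-\weight(\rho)=-\cA(w)$.

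The step that genuinely needs unambiguity — and the only place where the statement could fail for a general WFA — is this last collapse of the $\min$: in general $\min_i(-a_i)=-\max_i a_i$, which differs from $-\min_i a_i$, so negating weights does \emph{not} negate the value of an ambiguous automaton. Unambiguity removes this obstacle by forcing the $\min$ to range over at most one run, where negation and minimisation trivially commute. The only routine care required is the bookkeeping of $\infty$ under negation, handled by noting that the $\infty$-weight transitions are exactly those excluded from runs.
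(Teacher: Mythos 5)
Your proof is correct and follows the same route the paper takes (the paper only sketches it in the sentence preceding the proposition): the two automata share the same runs, so unambiguity transfers, and on the unique accepting run the minimum collapses to a singleton where negation and $\min$ trivially commute. Your closing observation about why the claim fails for ambiguous automata ($\min$ of negated values being $-\max$) is exactly the point the paper alludes to with ``not much can be said about $\cA^-$'' in general.
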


Next, we recall the \emph{product construction} for WFAs, by which we can \emph{sum} two WFAs. Consider two WFAs $\cA_i=\tup{Q_i,\Sigma,q^i_0,\Delta_i,F_i}$ for $i\in \{1,2\}$. We define their product WFA $\cB=\tup{Q_1\times Q_2, \Sigma, (q^1_0,q^2_0),\Delta,F_1\times F_2}$ where 
\[
\Delta=\{((q_1,q_2),\sigma,c_1+c_2,(p_1,p_2))\mid (q_1,\sigma,c_1,p_1)\in \Delta_1 \wedge (q_2,\sigma,c_2,p_2)\in \Delta_2\}
\]
The following is folklore (see e.g.,~\cite[Section 5.1]{Almagor2020Whatsdecidableweighted})
\begin{proposition}
\label{prop:product construction}
    In the notations above, for every $w\in \Sigma^*$ we have $\cB(w)=\cA_1(w)+\cA_2(w)$.
\end{proposition}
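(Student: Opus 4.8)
The plan is to establish a weight-preserving bijection between accepting runs of $\cB$ on a word $w$ and pairs consisting of an accepting run of $\cA_1$ and an accepting run of $\cA_2$ on $w$, and then to observe that minimising over such pairs factorises into the sum of the two separate minima.

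First I would unpack the definition of a run in the product. A run of $\cB$ on $w=\sigma_1\cdots\sigma_m$ is a sequence of transitions $((q_1^{i},q_2^{i}),\sigma_i,c_1^i+c_2^i,(q_1^{i+1},q_2^{i+1}))$ of $\Delta$, each arising, by the definition of $\Delta$, from a transition $(q_1^i,\sigma_i,c_1^i,q_1^{i+1})\in\Delta_1$ and a transition $(q_2^i,\sigma_i,c_2^i,q_2^{i+1})\in\Delta_2$. Projecting onto the two coordinates therefore yields a run $\rho_1:q_0^1\runsto{w}q_1^{m+1}$ of $\cA_1$ and a run $\rho_2:q_0^2\runsto{w}q_2^{m+1}$ of $\cA_2$, and conversely any such pair of runs on the same word $w$ recombines into a unique run of $\cB$. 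Here one must check that this correspondence respects the finiteness requirement built into the notion of a run: a product transition has finite weight $c_1^i+c_2^i$ exactly when both $c_1^i<\infty$ and $c_2^i<\infty$, so $\rho$ is a legal run of $\cB$ iff both projections are legal runs. This is the only place where the $(\min,+)$ convention $a+\infty=\infty$ plays a role, and it is the point I would be most careful about.

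Next I would read off the two facts that make the bijection useful. On weights, $\weight(\rho)=\sum_{i=1}^m (c_1^i+c_2^i)=\weight(\rho_1)+\weight(\rho_2)$ by commutativity and associativity of addition. On acceptance, the final state $(q_1^{m+1},q_2^{m+1})$ lies in $F_1\times F_2$ iff $q_1^{m+1}\in F_1$ and $q_2^{m+1}\in F_2$, i.e.\ iff both $\rho_1$ and $\rho_2$ are accepting. Thus the bijection restricts to accepting runs of $\cB$ on one side and pairs of accepting runs on the other.

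Finally I would compute
\[
\cB(w)=\min_{\rho}\weight(\rho)=\min_{\rho_1,\rho_2}\big(\weight(\rho_1)+\weight(\rho_2)\big)=\min_{\rho_1}\weight(\rho_1)+\min_{\rho_2}\weight(\rho_2)=\cA_1(w)+\cA_2(w),
\]
where in the third equality the two minima separate because $\rho_1$ and $\rho_2$ range over independent sets (accepting runs of $\cA_1$ on $w$, and of $\cA_2$ on $w$, respectively). The boundary case where one factor has no accepting run on $w$ is handled uniformly by the $\infty$ conventions: if, say, $\cA_1(w)=\infty$, then there are no accepting runs of $\cA_1$ on $w$, hence no pairs, hence no accepting runs of $\cB$ on $w$, giving $\cB(w)=\infty=\cA_1(w)+\cA_2(w)$. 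Since no step is genuinely hard, the main care is bookkeeping: ensuring the run/pair correspondence is a bijection (not merely a surjection) and that the finiteness convention is tracked consistently throughout.
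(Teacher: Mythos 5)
Your proof is correct, and it is precisely the standard argument (project a product run onto its two coordinates to get a weight-preserving, acceptance-preserving bijection with pairs of runs, then observe that the minimum of a sum over independent index sets splits into the sum of minima, with the $\infty$ conventions handling empty cases). The paper itself does not prove this proposition --- it cites it as folklore --- so there is nothing to diverge from; your write-up, including the care taken over the finiteness convention for legal runs, is exactly the proof the paper implicitly relies on.
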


\section{Proof of \cref{thm:unambig iff bounded U gap}}
\label{apx:thm unambig iff bounded U gap}
We can now present the detailed proof of \cref{thm:unambig iff bounded U gap}, based on the definition of gaps in \cref{def:U type B gap witness}.
\unambigchar*
We split the proof to two directions.
\subsection{$\cA$ is Unambiguisable $\implies$ Bounded Gaps}
\label{apx:sec:unambig implies bounded gap}
Let $\cA= \tup{Q,\Sigma, q_0, \Delta, F}$ and assume $\cA$ is unambiguisable. Let $\cU=\tup{S,\Sigma,s_0,\Lambda,G}$ be an equivalent unambiguous WFA. Combining \cref{prop:negation of unambig,prop:product construction}, we can obtain a WFA $\cB=\tup{Q\times S,\Sigma,(q_0,s_0),\Theta,F\times G}$ by negating $\cU$ and taking the product with $\cA$, so that for every $w\in \Sigma^*$ we have that either $\cA(w)=\cU(w)=\cB(w)=\infty$, or $\cB(w)=\cA(w)-\cU(w)=0$.

Let $\bigM=\max\{\norm{\cA},\norm{\cU},\norm{\cB}\}$ denote the maximal weight appearing in any of $\cA,\cU$ and $\cB$ in absolute value. Thus, in a single transition, any run of these WFAs can change the weight by at most $\bigM$.

Assume by way of contradiction that $\cA$ does not have bounded gaps. In particular, there exists a $B$-gap witness for $B>2|S||Q|\bigM+1$, given by $x,y\in \Sigma^*$, $p_1,q_1\in Q,p_2,q_2\in F$ and runs $\rho:q_0\runsto{x}p_1\runsto{y}p_2$ and $\chi:q_0\runsto{x}q_1\runsto{y}q_2$ as per \cref{def:U type B gap witness}.

Consider the single accepting run $\pi:s_0\runsto{x}s_1\runsto{y} s_2$ of $\cU$ on $xy$ ($\pi$ exists since $\cU$ accepts $xy$, and is unique since $\cU$ is unambiguous). 
We can now lift $\rho$ and $\chi$ to accepting runs of $\cB$ on $xy$ of the form 
$(\rho,\pi): (q_0,s_0)\runsto{x}(p_1,s_1)\runsto{y}(p_2,s_2)$ and $(\chi,\pi): (q_0,s_0)\runsto{x}(q_1,s_1)\runsto{y}(q_2,s_2)$.

By the construction of $\cB$ and using the gap property, we observe that 
\[
\begin{split}
&\cB((\rho,\pi)[x])-\cB((\chi,\pi)[x])\\
&={\cA}(\rho[x])-{\cU}(\pi[x])-({\cA}(\chi[x])-{\cU}(\pi[x]))\\
&={\cA}(\rho[x])-{\cA}(\chi[x])>B>2|S||Q|\bigM+1
\end{split}
\]
It follows that either $\cB((\rho,\pi)[x])>|S||Q|\bigM+1$ or $\cB((\chi,\pi)[x])<-|S||Q|\bigM-1$ (or both). We now split to these two cases, as depicted in \cref{apx:fig:unambiguisable to bounded gaps}. Intuitively, in the latter case we can find a shorter suffix that leads to a negative-weight run, which is a contradiction. In the former case we can find a negative cycle that can be pumped to a negative-weight run.

\begin{figure}[ht]
\centering
\begin{subfigure}[b]{0.45\textwidth}
\centering
\begin{tikzpicture}[
    >=stealth,
    thick,
    every node/.style={inner sep=1pt},
    dot/.style={circle,draw,inner sep=1pt},
    mindot/.style={circle,draw,double,inner sep=1pt}
]

  \node[dot]   (q0) at (0,0)   {$q_0$};
  \node[dot]   (q1) at (3.5,-1.4)   {$q_1$};
  \node[dot]   (p1) at (3.5,0.6) {$p_1$};
  \node[mindot] (p2) at (6,0)   {$p_2$};
  \node[mindot] (q2) at (6,0.7)   {$q_2$};
  \node[mindot] (q2b) at (5,-0.8)   {$q'_2$};

  \draw[->,dashed] (q0) -- node[above,pos=0.7] {$\chi$} (q1);
  \draw[->] (q0) -- node[above,pos=0.7] {$\rho$} (p1);
  \draw[dotted] (q0) -- node[above,pos=0.7] {$\chi$} (p2);

 \draw[<->,blue] (q1) -- node[fill=white,
    inner sep=2pt
] {$>\!B$} (p1);

  \draw[->] (p1) -- (p2);
  \draw[dashed,->] (q1) -- (q2);
  \draw[dashed,->] (q1) -- node[below,pos=0.7] {$y'$} (q2b);

  \draw[decorate,decoration={brace,mirror,amplitude=4pt}]
        (0,-1.8) -- node[below=4pt] {$x$} (3.5,-1.8);
  \draw[decorate,decoration={brace,mirror,amplitude=4pt}]
        (3.5,-1.8) -- node[below=4pt] {$y$} (6,-1.8);

  \node[right=10pt] at (p2) {$\min$};

\end{tikzpicture}
\caption{$\chi$ becomes too negative.}
\label{apx:fig:unambig to bounded gap negative case}
\end{subfigure}
\hfill
\begin{subfigure}[b]{0.47\textwidth}
\centering
\begin{tikzpicture}[
    >=stealth,
    thick,
    every node/.style={inner sep=1pt},
    dot/.style={circle,draw,inner sep=1pt},
    mindot/.style={circle,draw,double,inner sep=1pt}
]

  \node[dot]   (q0) at (0,0)   {$q_0$};
  \node[dot]   (q1) at (3.5,-0.6)   {$q_1$};
  \node[dot]   (p1) at (3.5,1.4) {$p_1$};
  \node[mindot] (p2) at (6,0)   {$p_2$};
  \node[mindot] (q2) at (6,0.7)   {$q_2$};
  \node[mindot] (q2b) at (7,-0.8)   {$q'_2$};

  \draw[->,dashed] (q0) -- node[above,pos=0.7] {$\chi$} (q1);
  \draw[->] (q0) -- node[above,pos=0.7] {$\rho$} (p1);
  \draw[dotted] (q0) -- node[above,pos=0.7] {$\chi$} (p2);

 \draw[<->,blue] (q1) -- node[fill=white,
    inner sep=2pt
] {$>\!B$} (p1);

  \draw[->] (p1) -- (p2);
  \draw[dashed,->] (q1) -- (q2);
  \draw[dashed,->] (p1) -- (q2b);

  \draw[decorate,decoration={brace,mirror,amplitude=4pt}]
        (0,-1) -- node[below=4pt] {$x$} (3.5,-1);
  \draw[decorate,decoration={brace,mirror,amplitude=4pt}]
        (3.5,-1) -- node[below=4pt] {$y$} (6,-1);

  \node[right=10pt] at (p2) {$\min$};

\end{tikzpicture}
\caption{$\rho$ decreases too much, leading to a negative cycle.}
\label{apx:fig:unambig to bounded gap positive case}
\end{subfigure}
\caption{Contradiction scenarios for \cref{apx:sec:unambig implies bounded gap} (the $s$ component is omitted). In \cref{apx:fig:unambig to bounded gap negative case} the run $\chi$ becomes too negative, so that a short suffix induces a negative run to $q'_2$. In \cref{apx:fig:unambig to bounded gap positive case}, the run $\rho$ decreases too much between $p_1$ and $p_2$, causing a negative cycle, which again leads to a negative run to $q'_2$.}
\label{apx:fig:unambiguisable to bounded gaps}
\end{figure}

 \paragraph*{If $\cB((\chi,\pi)[x])<-|S||Q|\bigM-1$} then since there is an accepting run $(q_1,s_1)\runsto{y}(q_2,s_2)$, there is also an accepting simple path from $(q_1,s_1)$ to $(q_2,s_2)$. 
Such a path induces a word $y'$ with $|y'|\le |S||Q|$ whose minimal-value accepting run $\tau:(q_1,s_1)\runsto{y'}(q_2,s_2)$ from $(q_1,s_1)$ in $\cB$ accumulates weight at most $|y'|\bigM \le |S||Q|\bigM$. It follows that $\cB((\chi,\pi)[x]\cdot \tau)<-|S||Q|\bigM-1+|S||Q|\bigM <0$. Therefore, $\cB(xy')<0$, in contradiction to the construction of $\cB$ (as it cannot assign non-zero weights).

 \paragraph*{If $\cB((\rho,\pi)[x])>|S||Q|\bigM+1$}
then recall that by the gap property, we have that $\rho$ is a minimal-weight run of $\cA$ on $xy$. Therefore, by the unambiguity of $\cU$, $(\rho,\pi)$ is also a minimal-weight run of $\cB$ on $xy$, and therefore $\cB((\rho,\pi))=0$. 
In particular, the suffix $(\rho,\pi)[y]:(p_1,s_1)\runsto{y}(p_2,s_2)$ has
$\cB((\rho,\pi)[y])<-|S||Q|\bigM-1$.
Therefore, there exists a negative-weight cycle along $(\rho,\pi)[y]$. Indeed, otherwise the minimal weight that can be accumulated is at most $-|S||Q|\bigM$. By repeating such a cycle we obtain a negative-weight run of $\cB$ to an accepting state (after concatenating it to $(\rho,\pi)[x]$). This is again a contradiction to the construction of $\cB$.

We conclude that $\cB$ has bounded gaps.

\subsection{Bounded Gaps $\implies$ $\cA$ is Unambiguisable}
\label{apx:sec:bounded gap implies unambig}
Assume that $\cA$ has gaps bounded by $B$. Before we construct an equivalent unambiguous WFA $\cU$, we define a notion of a \emph{canonical minimal run} of $\cA$ on a word $w$. This notion is then the crux of the construction: we build an unambiguous WFA that can track this canonical run. We illustrate this in \cref{xmp:unambiguisation and canonical}.

Fix some arbitrary linear order $\preceq$ on the states $Q$. We think of this order as a priority, where higher priority states are better. Consider a word $w=\sigma_1\cdots \sigma_n$ accepted by $\cA$ and let $\Upsilon$ be the set of minimal-weight accepting runs of $\cA$ on $w$. 
Since $\Upsilon$ is finite, we can denote its runs by $\{\rho^i=q_0^i,\ldots,q_n^i\mid 1\le i\le m\}$ for some $m$. 
We now describe a procedure for culling runs from $\Upsilon$ until we are left with a single run. 

Consider the sequence $\Upsilon_{n+1}\supset \Upsilon_n\supset\ldots \supset \Upsilon_0$ defined inductively (from $n$ to $0$) as follows.
\begin{itemize} 
    \item $\Upsilon_{n+1}=\Upsilon$.
    \item For $0\le k\le n$ we define $\Upsilon_{k}=\{\rho^i\mid \rho^i\in \Upsilon_{k+1}\wedge q_{k}^{i'}\preceq q_k^{i} \text{ for every }i' \text{ such that }\rho^{i'}\in \Upsilon_{k+1}\}$.
\end{itemize}
Intuitively, we consider the set of all minimal runs on $w$, and start scanning them from the end backwards. We first remove all runs for which $q^i_n$ is not $\preceq$-maximal. Then, from the remaining runs (if there are more than one), we keep only runs where $q^i_{n-1}$ is $\preceq$-maximal, and so on.

Note that for $0\le k\le n$, the runs in  $\Upsilon_k$ are all identical from index $k$. Therefore, $\Upsilon_0$ has a single run $\canrho$, which we dub the \emph{canonical run on $w$}.
By definition, $\canrho$ is a minimal run of $\cA$ on $w$. Also, since $\preceq$ is a linear order, the procedure above is deterministic, meaning that $\canrho$ is uniquely defined given $\preceq$.

We can now proceed to construct an equivalent unambiguous WFA $\cU$. We start with a brief intuition.
Upon reading a word $w$, the WFA $\cU$ attempts to track the canonical minimal run of $\cA$ on $w$. To do so, $\cU$ keeps track of all the runs in a window of weight $\pm B$ around a (nondeterministically chosen) state $q$. If all the runs stay close to $q$, then all the runs are tracked. However, once a run becomes too high or too low, the window tracks it as $\infty$ or $-\infty$, respectively.
Then, when the word ends, if the current state $q$ is accepting, has minimal weight in the window (in particular there are no accepting runs with weight $-\infty$) and $q$ has maximal priority, then this state accepts.

The main idea is that due to the gap property, if we indeed track the canonical run, then all other accepting states end within its $\pm B$ window, with higher weight or lower priority. In addition, other accepting runs that do not become minimal do not yield accepting runs of $\cU$, since their windows invariably ``believe'' that the canonical run has lower weight or higher priority, and therefore are not marked as accepting.

We now turn to the precise construction. A \emph{$B$-window} is a function $f:Q\to \{-\infty,-B,\ldots, B,\infty\}$, and we denote the set of such functions as $\Bwindows=\{-\infty,-B,\ldots, B,\infty\}^Q$. Intuitively, assume we are tracking a certain state $q$ with weight $0$. A $B$-window $f$ ``around $q$'' prescribes for each state $p$ whether the minimal weight with which $p$ can be reached is within distance $B$ from $0$, or whether it is more than $B$ above or below ($\infty$ and $-\infty$, respectively). Note that this is only intuition, and the precise details contradict it at certain points (which we mention later on).
In the following, we assume some arbitrary linear order $\preceq$ on the states $Q$.

We define $\cU=\tup{S,\Sigma,s_0,\Lambda,G}$ with the following components.
\begin{itemize} 
    \item The states are $S=Q\times \Bwindows$. Intuitively, each state tracks a state $q\in Q$ and a $B$-window around $q$.
    \item The initial state is $(q_0,f_0)$ where $f_0(q_0)=0$ and $f_0(p)=\infty$ for all $p\neq q_0$.
    \item The accepting states are
    \[G=\{(q,f_q)\mid q\in F \wedge \forall p\in F, (f_q(p)>0 \vee (f_q(p)=0 \wedge p\preceq q))\}\]
    That is, a state $(q,f_q)$ is accepting if $q\in F$ and $q$ has minimal weight among the accepting state in the window (and this weight is $0$). In case there are several accepting states with weight $0$, the state is accepting if $q$ is the maximal among them in the state ordering $\preceq$. 
    \item The transition relation $\Lambda$ is defined as follows. Consider $(q,f_q)\in S$ and a letter $\sigma\in \Sigma$. For each transition $(q,\sigma,c,p)\in \Delta$ we may introduce a transition in $\Lambda$, according to the following procedure.
    \begin{itemize}
        \item  We first update $f_q$ under this transition, by constructing an intermediate function $g:Q\to \bbZ\cup\{-\infty,\infty\}$ where for $p\in Q$ we define 
    \[g(p)=\min\{f_{q}(r)+\minweight(r\runsto{\sigma}p)-c\mid r\in Q\}
    \]
    Note that the latter can be $\infty$ if $p$ is reachable only from states with weight $\infty$ in $f_q$, and can be $-\infty$ if $p$ is reachable from some state with weight $-\infty$ in $f_q$.
    Also note that we normalise the weight of a transition by $-c$ (where $c$ is the weight of the transition we focus on).
    \item We now perform two ``consistency'' checks on $g$.
    \begin{itemize}
        \item If $g(p)<0$, then we do not introduce a transition. Intuitively, if $p$ can be reached with lower-weight, then another run of $\cU$ already tracks the lower-weight run.
        \item If there exists $r\neq q$ with $q\preceq r$ such that $f_{q}(r)+\minweight(r\runsto{\sigma}p)-c=g(p)$, then we do not introduce a transition. Intuitively, if there is a higher-priority state $r$ that yields $p$ with the same weight, then another run of $\cU$ already tracks this higher-priority run (c.f., tracking the canonical run).

    \end{itemize}
    \item If the check above succeeds, we turn $g$ into a $B$-window by capping its entries at $-B$ and $B$. That is, for every $r\in Q$ let $f_p(r)=\infty$ if $g(r)>B$ and $f_p(r)=-\infty$ if $g(r)<-B$, and otherwise let $f_p(r)=g(r)$.
    \item  We then add the transition $((q,f_q),\sigma,c,(p,f_p))$ to $\Lambda$. We say that this transition is \emph{lifted from $(q,\sigma,c,p)$}.
    \end{itemize}
\end{itemize}
It remains to prove that $\cU$ is equivalent to $\cA$ and that $\cU$ is unambiguous.
Before proceeding, we present two key lemmas regarding the behaviour of $\cU$. 
Intuitively, \cref{lem:characterization of unambig equivalent} shows how the $f_q$ component of the run tracks the minimal runs around $q$, assuming a given run of $\cU$.
In \cref{lem:canonical run lifts} we show that the canonical run on a word lifts to a run of $\cU$.

\begin{lemma}
\label{lem:characterization of unambig equivalent}
Consider a word $w=\sigma_1\cdots\sigma_n$ and a run $\pi=(q_0,f_0),(q_1,f_1),\ldots, (q_n,f_n)$ of $\cU$ on $w$. Let $\rho=q_0,\ldots,q_n$ be the run of $\cA$ from which $\pi$ lifts.
That is, for every $0\le i<n$, let $(q_i,\sigma_{i+1},c_{i+1},q_{i+1})\in\Delta$ be the corresponding transition in $\rho$, so that in $\pi$ the corresponding transition is
$((q_i,f_i),\sigma_{i+1},c_{i+1},(q_{i+1},f_{i+1}))\in\Lambda$.

For every $0\le i\le n$ and $p\in Q$, if $f_i(p)\neq \infty$, define
\[
\offset_i(p)=\minweight_\cA(q_0\runsto{w[1,i]}p)-\weight(\rho[1,i])\in\bbZ\cup\{\pm\infty\}.
\]
Then for every $0\le i\le n$ and $p\in Q$, the following hold:
\begin{enumerate} 
    \item $f_i(q_i)=0$. That is, the weight assigned to the current state being tracked remains at $0$.
    
    \item $f_i(p)\in\{-B,\ldots,B\}$ if and only if there is a minimal-weight run $\tau:q_0\runsto{w[1,i]}p$, $\tau=s_0,s_1,\ldots, s_i$ such that
    $|\offset_j(s_j)|\le B$ for all $0\le j\le i$; in that case,
    $f_i(p)=\offset_i(p)$.
    
    \item $f_i(p)=-\infty$ if and only if there exists a run $\tau:q_0\runsto{w[1,i]}p$, $\tau=s_0,s_1,\ldots, s_i$ such that the minimal index
    \[i_0=\min\{j\mid j\le i \wedge |\offset_j(s_j)|>B\}\] 
    exists (i.e., the minimum is not empty) and $\offset_{i_0}(s_{i_0})<-B$.
    
    \item $f_i(p)=\infty$ if and only if every run $\tau:q_0\runsto{w[1,i]}p$, $\tau=s_0,s_1,\ldots, s_i$ satisfies that the minimal index
    \[i_0=\min\{j\mid j\le i \wedge |\offset_j(s_j)|>B\}\] 
    exists (i.e., the minimum is not empty) and $\offset_{i_0}(s_{i_0})>B$.
\end{enumerate}
\end{lemma}

\begin{proof}
The proof follows from the definition of $\Lambda$ by induction on $i$. 
 \subparagraph*{Base case ($i=0$).}
We have $w[1,0]=\epsilon$ and $\weight(\rho[1,0])=0$. By definition,
$f_0(q_0)=0$ and $f_0(p)=\infty$ for all $p\neq q_0$.
Also $\offset_0(q_0)=0$ and $\offset_0(p)=\infty$ for $p\neq q_0$,
so the three statements hold. 

 \subparagraph*{Induction step.}
Assume the claim holds for index $i$.
Fix some $p\in Q$, and recall that $f_{i+1}(p)$ is determined by capping \[g(p)=\min\{f_{q}(r)+\minweight(r\runsto{\sigma}p)-c\mid r\in Q\} \]
\begin{enumerate} 
    \item $f_{i+1}(q_{i+1})=0$ follows directly by the definition of $\Lambda$ (since $\minweight(q_i\runsto{\sigma_{i+1}}q_{i+1})=c_{i+1}$, and since the first check on $g$ passes).
    \item By the definition of $\Lambda$, we have that $f_{i+1}(p)\in \{-B,\ldots,B\}$ if and only if the minimum in $g_{i+1}(p)$ above is also between $\{-B,\ldots,B\}$. 
    By the induction hypothesis, the state $r$ for which this minimum is attained satisfies $f_i(r)\in \{-B,\ldots,B\}$. We can then use the induction hypothesis to obtain a run to $r$ whose prefixes satisfy the offset criterion. 
    Composing those with the transition $r\runsto{\sigma_{i+1}}p$ yields a minimal weight run $\tau$ as required, and in particular $|\offset_{i+1}(p)|\le B$. 
    Conversely, the existence of such a minimal run again implies by the induction hypothesis that $f_{i+1}(p)\in \{-B,\ldots, B\}$.
    \item Similarly, the minimum above is $-\infty$ if there is some $r\in Q$ with $r\runsto{\sigma_{i+1}}p$ such that either $f_{i}(r)=-\infty$ (in which case Item 3 follows by induction), or $f_i(r)+\minweight(r\runsto{\sigma_{i+1}} p)-c_{i+1}<-B$, which is equivalent (by the induction hypothesis) to the existence of a run $\tau$ as required.
    \item Finally, the minimum above is $\infty$ if every $r\in Q$ with $r\runsto{\sigma_{i+1}}p$ satisfies that either $f_i(r)=\infty$ or $f_i(r)+\minweight(r\runsto{\sigma_{i+1}}p)-c_{i+1}>B$. Again, by the induction hypothesis this is equivalent to all runs $\tau:q_0\runsto{w[1,i+1]}p$ satisfying the required condition.
\end{enumerate}
\end{proof}
In \cref{lem:characterization of unambig equivalent} we assume that we start with some existing run of $\cU$. However, the consistency checks on $g$ are not simple to meet, and it is not clear that such runs exist. The following lemma shows that the canonical run can be lifted to a run of $\cU$.

\begin{lemma}
\label{lem:canonical run lifts}
Consider a word $w=\sigma_1\cdots\sigma_n$ and the canonical run $\canrho:q_0\runsto{w}q_n$ of $\cA$ on $w$ denoted $\canrho=q_0,q_1,\ldots,q_n$.
Let $\pi=(q_0,f_0),(q_1,f_1),\ldots, (q_n,f_n)$ be the sequence of lifted transitions of $\cU$ on $w$ induced by $\canrho$. That is, for every $0\le i<n$, let $(q_i,\sigma_{i+1},c_{i+1},q_{i+1})\in\Delta$ be the corresponding transition in $\canrho$, then we take in $\pi$ the transition
$((q_i,f_i),\sigma_{i+1},c_{i+1},(q_{i+1},f_{i+1}))\in\Lambda$.
Then $\pi$ is a run of $\cU$. 
\end{lemma}

\begin{proof}
The proof follows from the definition of $\Lambda$ by induction on $i$. At every step we show that the transition exists in $\cU$, i.e., that it passes the checks on $g$ in the definition of $\Lambda$.

 \subparagraph*{Base case ($i=0$).}
The initial state in $\pi$ is $(q_0,f_0)$, which is the initial state of $\cU$.

 \subparagraph*{Induction step.}
Assume the claim holds for index $i$.
We show that the transition 
\[((q_{i},f_i),\sigma_{i+1},c_{i+1},(q_{i+1},f_{i+1}))\] 
is in $\Lambda$.
By the definition of $\Lambda$, $f_{i+1}$ is obtained from the intermediate function
\[
g_{i+1}(p)=
\min\{ f_i(r)+\minweight(r\runsto{\sigma_{i+1}}p)-c_{i+1} | r\in Q \}
\]
by applying the capping rule as per the definition of $\cU$, if $g$ passes the two consistency checks. We now verify that both checks pass. The proof relies on \cref{lem:characterization of unambig equivalent}, and in particular uses the $\offset$ notation with respect to the run $\canrho$.
\begin{itemize}
    \item Assume by way of contradiction that $g_{i+1}(q_{i+1})<0$, then there exists some $r\in Q$ such that $f_i(r)+\minweight(r\runsto{\sigma_{i+1}}q_{i+1})-c_{i+1}<0$. 
    We split to two cases.
    If $f_i(r)\in \bbZ$ then intuitively we found a lower-weight run than $\canrho$, which is a contradiction. Formally, recall that $c_{i+1}=\minweight(q_i\runsto{\sigma_{i+1}}q_{i+1})$, then reordering gives us
    \[
        \minweight(r\runsto{\sigma_{i+1}}q_{i+1})<\minweight(q_i\runsto{\sigma_{i+1}}q_{i+1})-f_i(r)
    \]
    Since the run $\pi[1,i]$ is valid by the induction hypothesis, then \cref{lem:characterization of unambig equivalent} gives us
    \[f_i(r)=\offset_i(r)=\minweight(q_0\runsto{w[1,i]}r)-\weight(\canrho[1,i])\]
    We can now use the two equations above to get
    \[
    \begin{split}
    &\minweight(q_0\runsto{w[1,i+1]}q_{i+1})\le \minweight(q_0\runsto{w[1,i]}r)+\minweight(r\runsto{\sigma_{i+1}}q_{i+1})<\\
    &f_i(r)+\weight(\canrho[1,i])+\minweight(q_i\runsto{\sigma_{i+1}}q_{i+1})-f_i(r)=\weight(\canrho[1,i+1])
    \end{split}
    \]
    This, however, contradicts the fact that $\canrho$ is a minimal-weight run.

    The second case is when $f_i(r)=-\infty$. Intuitively, in this case there is an accepting run that at some point went outside the $B$ window of $\canrho$, in contradiction to the bounded gap property. 
    Formally, again by the induction hypothesis and \cref{lem:characterization of unambig equivalent} there is a run $\tau:q_0\runsto{w[1,i]}r$ that at some index $i_0$ gets offset less than $-B$. 
    We then have $\weight(\canrho[1,i_0])-\weight(\tau[1,i_0])>B$. However, this is a contradiction to the bounded gaps of $\cA$ (with $x=w[1,i_0-1]$ and $y=w[i_0,n]$).

    We remark that despite the seeming simplicity of the second case, it hides an intriguing behaviour: it may be that the actual run $\tau$, after being tracked as $-\infty$, actually rises in weight and becomes higher than $\canrho$. In such a setting, tracking $\tau$ as $-\infty$ is actually contradicting to the intuition of the construction. Technically, however, it plays no role.

    We conclude that the first consistency check of $g_{i+1}$ passes.

    \item For the second consistency check, assume by way of contradiction that there exists $r\neq q_{i}$ with $q_i\preceq r$ such that $f_i(r)+\minweight(r\runsto{\sigma_{i+1}}q_{i+1})-c_{i+1}=g_{i+1}(q_{i+1})$.
    In particular, we have that $f_i(r)\in \bbZ$, and by the induction hypothesis and \cref{lem:characterization of unambig equivalent} there exists a run $\tau:q_0\runsto{w[1,i]}r$ such that $\weight(\tau)=\weight(\canrho[1,i])+f_i(r)$. Consider the run $\tau\cdot\canrho[i+1,n]$. By the above we have $\weight(\tau\cdot\canrho[i+1,n])=\weight(\canrho)$. Moreover, $\canrho$ and $\tau\cdot\canrho[i+1,n]$ are identical in their suffix from $i+1$. 
    However, since $q_i\preceq r$ and $q_i\neq r$, it follows that $\canrho$ would be culled in $\Upsilon_{i}$, in contradiction to it being the canonical run.

    It follows that the second check also passes.
\end{itemize}
Since both checks pass, we conclude that the transition $((q_{i},f_i),\sigma_{i+1},c_{i+1},(q_{i+1},f_{i+1}))$ exists in $\Lambda$, and we are done.
\end{proof}

 \paragraph*{$\cU$ is equivalent to $\cA$} 
We first show that for every word $w$ we have ${\cA}(w)\le \cU(w)$. 
Consider a word $w=\sigma_1\cdots\sigma_n$ that is accepted by $\cU$ and let $\pi:q_0\runsto{w}F$ be a minimal-weight accepting run of $\cU$ on $w$. 
Write $\pi=(q_0,f_0),(q_1,f_1),\ldots, (q_n,f_n)$. 
We claim that the projection of $\pi$ onto $\cA$ is an accepting run of $\cA$ on $w$, with the same weight as that of $\pi$. More precisely, let 
$\rho=q_0,q_1,\ldots,q_n$ be the corresponding run of $\cA$ on $w$. Clearly this is a legal and accepting run, since each transition $((q_i,f_i),\sigma_{i+1},c_{i+1},(q_{i+1},f_{i+1}))\in \Lambda$ in $\pi$ is lifted from a transition $(q_i,\sigma_{i+1},c_{i+1},q_{i+1})\in \Delta$, and since $q_n\in F$ (by the definition of $G$).
Moreover, the weight of the transitions remains $c_{i+1}$, meaning that the run accumulates the same weight as $\pi$. It follows that 
\[\cA(w)=\minweight_{\cA}(q_0\runsto{w}F)\le {\cA}(\rho)={\cU}(\pi)={\cU}(w)\]
Note that this direction relies only on the fact that $\cU$ tracks the runs of $\cA$ in its first component, and does not rely on the gap property, nor on the special structure of $\cU$.

We now turn to show that for every word $w$ we have ${\cU}(w)\le \cA(w)$. 
Consider a word $w=\sigma_1\cdots \sigma_n$, and let $\canrho=q_0,\ldots,q_n$ be the canonical minimal-weight accepting run of $\cA$ on $w$. 
By \cref{lem:characterization of unambig equivalent} the run $\canrho$ lifts to a run $\pi$ of $\cU$ on $w$. Moreover, the sequence of weights accumulated by this run is identical to that of $\canrho$, so $\cA(\canrho)=\cU(\pi)$. 
It remains to show that $\pi$ is accepting. Still by \cref{lem:characterization of unambig equivalent} we have that the last state $(q_n,f_n)$ in $\pi$ satisfies $f_n(q_n)=0$, and $q_n\in F$ since $\canrho$ is accepting. 
Following the definition of the accepting states $G$, assume by way of contradiction that there exists an accepting state $p\in F$ such that $f_n(p)<0$ or $f_n(p)=0$ and $q_n\preceq p$. The former case implies the existence of an accepting run with lower weight than $\canrho$, in contradiction to the minimality of $\canrho$. 
The latter implies that $\canrho$ is culled at $\Upsilon_n$, in contradiction to $\canrho$ being the canonical run.
We conclude that $(q_n,f_n)$ is accepting. Thus, \[\cU(w)=\minweight_{\cU}((q_0,f_0)\runsto{w}G)\le \cU(\pi)= {\cA}(\canrho)={\cA}(w)\]

 \paragraph*{$\cU$ is unambiguous}
It remains to prove that $\cU$ is unambiguous. Consider a word $w=\sigma_1\cdots\sigma_n$ that is accepted by $\cU$. Thus, $w$ is also accepted by $\cA$, and therefore the canonical run $\canrho:q_0\runsto{w}q_n$ is defined.
Let $\pi=(s_0,f_0),\ldots,(s_n,f_n)$ be an accepting run of $\cU$ on $w$. We claim that $\pi$ is the run lifted from $\canrho$, and is therefore unique.

Let $\tau=s_0,\ldots,s_n$ be the state-projection of the run $\pi$. By the definition of $\Lambda$ this is an accepting run of $\cA$ on $w$.
We prove by reverse induction, from $n$ to $0$, that $s_i=q_i$ for all $i$. In particular, this means that $\tau=\canrho$, and therefore $\pi$ is lifted from $\canrho$.

For the base case, we have that $(s_n,f_n)\in G$. By \cref{lem:characterization of unambig equivalent} (Item 1) we have $f_n(s_n)=0$. Since $\canrho$ is a minimal-weight run of $\cA$, then $\weight(q_0\runsto{w}q_n)\le \weight(q_0\runsto{w}s_n)$. By \cref{lem:characterization of unambig equivalent} (Items 2,3) we have $f_n(q_n)\le 0$. If $f_n(q_n)<0$, then since $q_n\in F$ this violates the condition in $G$, which is a contradiction. Thus, we have $f_n(q_n)=0$. In particular, by \cref{lem:characterization of unambig equivalent} this means that $\tau$ is also a minimal accepting run of $\cA$ (since $\canrho$ has offset $0$ from it).
Now, since $(s_n,f_n)\in G$, then $q_n\preceq s_n$. On the other hand, $q_n$ is the $\preceq$-maximal state among all minimal-accepting runs, by the culling process in $\Upsilon_n$, so $s_n\preceq q_n$. It follows that $s_n=q_n$.

We proceed to the induction case. Assume $s_{i+1},\ldots,s_n=q_{i+1},\ldots,q_n$. We prove that $s_i=q_i$. 
Since $\canrho$ is a minimal-weight run of $\cA$, then by the bounded gap property we have that $|\weight(\canrho[1,j])-\weight(\tau[1,j])|\le B$ for all $1\le j\le n$, and in particular for $1\le j\le i$. By \cref{lem:characterization of unambig equivalent} (Item 2) this means that $f_i(q_i)\in \{-B,\ldots,B\}$. 
By the induction hypothesis, since $q_{i+1}=s_{i+1}$, we have $f_{i+1}(q_{i+1})=f_{i+1}(s_{i+1})=0$. 
By the definition of $\Lambda$, this means that
$f_i(q_i)+\minweight(q_i\runsto{\sigma_{i+1}}q_{i+1})-c_{i+1}=0=f_{i+1}(s_{i+1})$.
By the second consistency check, this implies that $s_{i}\preceq q_i$ (otherwise the test would fail).
However, note that $s_0,\ldots,s_i,s_{i+1},\ldots,s_n$ and $q_0,\ldots,q_i,q_{i+1}\ldots,q_n$ share the $[i+1,n]$ suffix and are both accepting runs. Therefore, by the culling process at $\Upsilon_i$ we have that $s_i\preceq q_i$. Thus, we conclude that $s_i=q_i$ and we are done.

We conclude that $\tau=\canrho$, implying the uniqueness of the accepting run. Thus, $\cU$ is unambiguous.

\section{Correctness of the Reduction from Unambiguisability to Determinisability}
\label{apx:reduction correctness}
Before proceeding to prove the correctness of the construction, we establish a basic correspondence between $\cA$ and $\cB$. For a word $w\in \Gamma^*$, we denote by $w|_\Sigma$ its projection on $\Sigma^*$. Similarly, for a run $\rho$ of $\cB$ we denote by $\rho|_Q$ its projection on $Q$.
\begin{proposition}
    \label{prop: redcution correspondence}
    Consider a word $w=(\sigma_1,\alpha_1)\cdots(\sigma_n,\alpha_n)\in \Gamma^*$ and a run $\rho:s_0\runsto{w}(p,f)$ of $\cB$ on $w$, then $\rho|_Q:q_0\runsto{w|_\Sigma}p$ and for every $0\le i\le n$ we have $\weight_{\cB}(\rho[1,i])=\weight_{\cA}(\rho|_Q[1,i])$.

    Conversely, for every word $w'=\sigma_1\cdots \sigma_n\in \Sigma^*$ there is an \emph{update track} $u=\alpha_1\cdots \alpha_n\in \update^*$ and a \emph{commitment track} $\Theta=f_0,\ldots,f_n\in \commit^*$ such that for every run $\chi:q_0\runsto{w'}q_n$ of $\cA$ with $\chi=q_0,q_1,\ldots,q_n$, the sequence $\tau=(q_0,f_0),\ldots,(q_n,f_n)$ is a run of $\cB$ $\tau:(q_0,f_0)\runsto{(\sigma_1,\alpha_1)\cdots (\sigma_n,\alpha_n)}(q_n,f_n)$.
    Moreover, if $\chi$ is accepting then $\tau$ is accepting.

\end{proposition}
\begin{proof}
    For the first part, note that the first component of the states of $\cB$ follows exactly the transitions of $\cA$. In particular, the projection $\rho|_Q$ trivially satisfies the requirement.

    The second part is very slightly more nuanced. Given $w'=\sigma_1\cdots \sigma_n\in \Sigma^*$, we construct the update track and commitment track to follow the run structure of $\cA$ on $w'$. Starting with $f_0$ such that $s_0=(q_0,f_0)$ as per the definition of $\cB$, we define inductively for $0<i\le n$ the update $\alpha_{i}$ and commitment $f_{i}$ by 
    \[
    \alpha_i(p,q)=\begin{cases}
        \keep & p\runsto{\sigma_i}q\runsto{\sigma_{i+1}\cdots \sigma_n}q_\fin\\
        \trim & p\runsto{\sigma_i}q \wedge \minweight(q\runsto{\sigma_{i+1}\cdots \sigma_n}q_\fin)=\infty\\
        \bot & (p,\sigma_i,\infty,q)\in \Delta
    \end{cases}
\]
\[f_i(p)=\begin{cases}
  \keep &
    (\exists r\in Q,\ f_{i-1}(r)=\keep \wedge\ r\keep p\in\alpha_i) \wedge (\forall r\in Q,\ f_{i-1}(r)\neq \bot \implies r\trim p \notin \alpha_i)\\[0pt]
     \trim & \exists r\in Q,\ f_{i-1}(r)\neq \bot \wedge r\trim p\in \alpha_i 
     \\[0pt]
    \bot &
    \forall r\in Q,\ f_{i-1}(r)\neq \bot \implies r\bot p\in \alpha_i
\end{cases}
    \]
Intuitively, $f_i(p)=\keep$ if it is $\keep$-reached from some $\keep$ state, and no other reachable state marks it as $\trim$, and $f_i(p)=\trim$ if some reachable state marks it as $\trim$. Otherwise it is unreachable.
    
    Now, let $\chi:q_0\runsto{w'}q_n$ be a run, we construct a run $\tau=(q_0,f_0),\ldots,(q_n,f_n)$ of $\cB$ on $w=(\sigma_1,\alpha_1)\cdots (\sigma_n,\alpha_n)$. 
    By the definition of $\Lambda$, it readily follows that the transitions pass all consistency checks. 
    Moreover, if $q_n=q_\fin$, it follows that $f_n(q_\fin)=\keep$, and by the construction of $f_n$ we immediately have $f_n(p)\neq \keep$ for every $p\neq q_\fin$. Thus, $(q_\fin,f_n)\in G$, so $\tau$ is accepting.
\end{proof}

\subsection{$\cB$ is Determinisable $\implies$ $\cA$ is Unambiguisable}
\label{sec: B is det to A is unambig}
We prove the contrapositive of this claim -- if $\cA$ is not \unable then $\cB$ is not determinisable. This is done via the gap characterisation. Concretely, we prove the following.
\begin{lemma}
\label{lem:Utype in A to Dtype in B}
    If there exists a \Utype $B$-gap witness in $\cA$, then there exists a \Dtype $B$-gap witness in $\cB$.
\end{lemma}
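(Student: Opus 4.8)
The plan is to lift the given \Utype witness of $\cA$ into a \Dtype witness of $\cB$ by fitting $x$ and $y$ with the canonical update track and transporting $\rho,\chi$ through \cref{prop: redcution correspondence}. Write $xy=\sigma_1\cdots\sigma_n$ with $|x|=\ell$, and apply the second part of \cref{prop: redcution correspondence} to $w'=xy$ to obtain an update track $\alpha_1\cdots\alpha_n$ and commitment track $f_0,\ldots,f_n$. Set $X=(\sigma_1,\alpha_1)\cdots(\sigma_\ell,\alpha_\ell)$ and $Y=(\sigma_{\ell+1},\alpha_{\ell+1})\cdots(\sigma_n,\alpha_n)$, so that $X|_\Sigma=x$ and $Y|_\Sigma=y$. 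By the proposition, $\rho$ and $\chi$ lift through this single track to runs $\widehat\rho,\widehat\chi$ of $\cB$ on $XY$. I take $\widehat\rho$ as the upper run, $\widehat\chi[X]$ as the lower run, and the states $P_1=(p_1,f_\ell)$, $Q_1=(q_1,f_\ell)$, $P_2=(q_\fin,f_n)$.

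Next I would record the facts that make the data transfer. The first part of \cref{prop: redcution correspondence} gives weight-faithfulness of the projection $|_Q$, so $\weight(\widehat\rho[X])=\weight(\rho[x])$ and $\weight(\widehat\chi[X])=\weight(\chi[x])$; moreover, since $\rho$ ends in $q_\fin\in F$, its lift ends in $G$, so $\widehat\rho:s_0\runsto{X}P_1\runsto{Y}P_2$ with $P_2\in G$, while $\widehat\chi[X]:s_0\runsto{X}Q_1$ is a run on $X$.

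The core of the proof is verifying the three conditions of \cref{def: det B gap witness}, each of which is an instance of a minimality transfer between $\cA$ and $\cB$. For minimality on $X$: every run of $\cB$ on $X$ projects to a run of $\cA$ on $x$ of equal weight, whence $\minweight(s_0\runsto{X}S)\ge \minweight(q_0\runsto{x}Q)$; conversely $\widehat\chi[X]$ realises the weight $\weight(\chi[x])=\minweight(q_0\runsto{x}Q)$, the equality being the first clause of the \Utype witness. Thus $\widehat\chi[X]$ is minimal on $X$. Applying the same projection argument to \emph{accepting} runs, and using that $\widehat\rho$ is accepting of weight $\weight(\rho)=\minweight(q_0\runsto{xy}F)$ (the second \Utype clause), yields $\minweight(s_0\runsto{XY}G)=\weight(\widehat\rho)$, so $\widehat\rho$ is a minimal accepting run on $XY$. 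The gap clause is then immediate: $\weight(\widehat\rho[X])-\weight(\widehat\chi[X])=\weight(\rho[x])-\weight(\chi[x])>B$.

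The step I expect to be delicate is minimality on $X$, and it is precisely where the extra strength of a \Utype witness is used. The $\ge$ direction needs the projection to be weight-faithful (first part of \cref{prop: redcution correspondence}); the $\le$ direction needs a concrete low-weight run of $\cB$ on $X$ of the right weight, and this is supplied by $\widehat\chi[X]$ only because $\chi$ is the prefix of a \emph{full} accepting run on $xy$ and hence lifts through the \emph{same} track as $\rho$ — exactly the property (per \cref{rmk:Utype vs Dtype}) that distinguishes \Utype from \Dtype witnesses. I would finally remark that since $\chi$ continues as $q_0\runsto{x}q_1\runsto{y}q_2$ with $q_2\in F$, the lifted lower state $Q_1$ co-reaches $G$ in $\cB$ via $Y$, keeping the witness inside the trim part of $\cB$.
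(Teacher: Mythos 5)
Your proof is correct and takes essentially the same route as the paper's: lift $\rho$ and $\chi$ to runs of $\cB$ via the update/commitment tracks of \cref{prop: redcution correspondence}, then transfer the two minimality conditions by weight-faithful projection and read off the gap. The only (cosmetic) difference is that you apply the second part of the proposition once to the whole word $xy$ and split the resulting track, whereas the paper phrases it as separate tracks for $x$ and $y$; your observation that the \Utype condition is precisely what lets $\chi$ lift through the same track matches the paper's intent.
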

\begin{proof}
    Recall that $F=\{q_\fin\}$ and consider a \Utype $B$-gap witness $xy$ in $\cA$, then by \cref{def:U type B gap witness} there are $p_1,q_1\in Q$ such that there exist runs $\rho:q_0\runsto{x}p_1\runsto{y}q_\fin$ and $\chi:q_0\runsto{x}q_1\runsto{y}q_\fin$ where $\minweight(q_0\runsto{x}Q)=\weight(\chi[x])$, $\minweight(q_0\runsto{xy}q_\fin)=\weight(\rho)$, and 
    $\weight(\rho[x])-\weight(\chi[x])> B$.

    Denote $x=x_1\cdots x_n$ and $y=y_1\cdots y_m$. 
    By the second part of \cref{prop: redcution correspondence} we can construct update tracks $\alpha_1\cdots \alpha_n\in \update^*, \beta_1\cdots \beta_m\in \update^*$ and commitment tracks $f_0,\ldots, f_n,g_1,\ldots,g_m$ so that for the words $x'=(x_1,\alpha_1)\cdots (x_n,\alpha_n)$ and $y'=(y_1,\beta_1)\cdots (y_m,\beta_m)$ we have runs 
    $\rho':(q_0,f_0)\runsto{x'}(p_1,f_n)\runsto{y'}(q_\fin,g_m)$ and $\chi':(q_0,f_0)\runsto{x'}(q_1,f_n)\runsto{y'}(q_\fin,g_m)$. 

    We claim that $x'y'$ is a \Dtype $B$-gap witness in $\cB$.    
    By the first part of \cref{prop: redcution correspondence} we have that the weights of $\rho$ and $\rho'$ coincide for every prefix, and the same for $\chi$ and $\chi'$. In particular, it holds that $\weight(\rho'[x'])-\weight(\chi'[x'])>B$, satisfying the 3rd requirement of \cref{def: det B gap witness}. We continue to show the other two requirements.

    First, we claim that $\minweight_\cB((q_0,f_0)\runsto{x'}S)=\cB(\chi'[x'])$. Indeed, if by way of contradiction there exists a run $\tau:(q_0,f_0)\runsto{x'}S$ with $\cB(\tau)<\cB(\chi'[x'])$, then by the first part of \cref{prop: redcution correspondence} we have $\cA(\tau|_Q)<\cA(\chi[x])$, contradicting the first requirement of \cref{def:U type B gap witness}.
    
    An analogous argument shows that $\minweight_\cB((q_0,f_0)\runsto{x'y'}G)=\weight(\rho'[x'y'])$, concluding that $x'y'$ is indeed a \Dtype $B$-gap witness in $\cB$. 
\end{proof}

Using the lemma, we now have that if $\cA$ is \emph{not} \unable, then by \cref{thm:unambig iff bounded U gap} for every $B$ there is a \Utype $B$-gap witness in $\cA$. By \cref{lem:Utype in A to Dtype in B} there is also a \Dtype $B$-gap witness in $\cB$, and therefore by \cref{thm:det iff bounded gap} we have that $\cB$ is not determinisable. By the contrapositive, if $\cB$ is determinisable, then $\cA$ is \unable.

\subsection{$\cA$ is Unambiguisable $\implies$ $\cB$ is Determinisable}
\label{sec: A is unambig to B is det}
We proceed to the converse (and harder) correctness proof, where we actually use the properties in the construction of $\cB$. We present the converse of \cref{lem:Utype in A to Dtype in B}. Before proceeding, we assume without loss of generality that $\cB$ is \emph{trim}, i.e., that every state is reachable from the initial state, and can reach the accepting states. Trivially, every WFA is determinisable if and only if its trimmed version is determinisable. Note, however, that we need to use \cref{prop: redcution correspondence} carefully, so as not to induce runs to states that are trimmed from $\cB$. We comment on this when relevant.
\begin{lemma}
\label{lem:Dtype in B to Utype in A}
    If there exists a \Dtype $B$-gap witness in $\cB$, then there exists a \Utype $B$-gap witness in $\cA$.
\end{lemma}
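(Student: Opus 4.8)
The plan is to take the given \Dtype $B$-gap witness in $\cB$ and project it back to $\cA$. Writing the witness (in the notation of \cref{def: det B gap witness} transported to $\cB$) as runs $R\colon s_0\runsto{X}P_1\runsto{Y}P_2$ and $C\colon s_0\runsto{X}Q_1$ with $P_2\in G$, I would set $x=X|_\Sigma$, $y=Y|_\Sigma$, $\rho=R|_Q$ and $\chi=C|_Q$, and write $P_1=(p_1,f_n)$, $Q_1=(q_1,f_n)$, $P_2=(q_\fin,g_m)$. Since $P_2\in G$ the run $\rho$ ends in $q_\fin$, hence is accepting, and by the first part of \cref{prop: redcution correspondence} the weight of every prefix is preserved; in particular the gap $\weight(\rho[x])-\weight(\chi[x])>B$ is inherited verbatim. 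The crux, in light of \cref{rmk:Utype vs Dtype}, is to upgrade this \Dtype witness to a \Utype one by showing that the lower run $\chi$ can itself be continued to an accepting run on $xy$. One structural fact I would record first is that the $\keep$-content of the commitment is pinned by the input letters through the incoming-consistency condition, so that $P_1$ and $Q_1$ reach the \emph{same} commitment $f_n$ after reading $X$; this is what ties the fate of $\chi$ to that of $\rho$.

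Next I would verify the two minimality requirements of \cref{def:U type B gap witness}. For $\rho$ being a minimal \emph{accepting} run on $xy$, I would argue that any accepting run of $\cA$ on $xy$ lifts, along the fixed update track of $XY$, to a run of $\cB$ ending in the accepting state $(q_\fin,g_m)\in G$; being accepting, this lift visits only co-reachable states and therefore survives the trimming of $\cB$, so it is a genuine run of trimmed $\cB$ on $XY$ and the minimality of $R$ transfers to $\rho$. The minimality of $\chi[x]$ among \emph{all} runs of $\cA$ on $x$ is the analogous (first) direction of \cref{prop: redcution correspondence}, but here the trimming of $\cB$ must be treated carefully, exactly as flagged before the lemma: I must ensure that a minimal run of $\cA$ on $x$ lifts into trimmed $\cB$, so that the trimmed minimum realised by $C$ coincides with the true minimum over all runs of $\cA$ on $x$.

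The main obstacle, and the point where the pruning built into $\cB$ is essential, is the continuation of $\chi$. The key claim is that the shared commitment $f_n$ marks the endpoint $q_1$ of $\chi$ as lying on an accepting path, i.e.\ $f_n(q_1)=\keep$. This is where trimness of $\cB$ together with the outgoing-consistency invariant is indispensable: a current state that is not on a $\keep$-path forces all committed states to proceed in lockstep along $\keep$-transitions, and such a configuration is generically pruned, so the endpoint of the surviving minimal run $C$ must be a $\keep$ state. Granting $f_n(q_1)=\keep$, I would combine it with the fact that $\rho$ continues through $Y$ into $G$: the final commitment $g_m$ has $q_\fin$ as its unique $\keep$ state, so reading $Y$ funnels the $\keep$-states of $f_n$ down to $q_\fin$ along $\keep$-marked transitions, every intermediate $\keep$-state possessing an available $\keep$-successor. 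Tracing these $\keep$-transitions forward from $q_1$ then yields, via update-consistency, an actual run $q_1\runsto{y}q_\fin$ in $\cA$.

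This last step exhibits the required continuation $\chi\colon q_0\runsto{x}q_1\runsto{y}q_\fin$ with $q_\fin\in F$, so that $x,y$ together with $\rho,\chi$ constitute a \Utype $B$-gap witness in $\cA$, completing the proof. I expect the two delicate points to be (i) establishing $f_n(q_1)=\keep$ — this combines the outgoing-consistency invariant of $\Lambda$ with the assumption that $\cB$ is trim, and is the real content of the construction's pruning — and (ii) reconciling the trimmed minimum computed by $C$ in $\cB$ with the genuine minimum over all runs of $\cA$ on $x$, so that the first requirement of \cref{def:U type B gap witness} is met on the nose rather than only up to the distortion introduced by trimming.
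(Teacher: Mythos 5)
Your proposal follows the paper's proof essentially step for step: project the witness via the first part of \cref{prop: redcution correspondence}, observe that the deterministic commitment updates force $P_1$ and $Q_1$ to share the same commitment $f_n$, derive $f_n(q_1)=\keep$ from trimness of $\cB$ together with outgoing consistency, and then extract the continuation $q_1\runsto{y}q_\fin$ in $\cA$ by following $\keep$-marked transitions (the paper phrases this as a backwards induction from $G$ using incoming consistency). The only difference is one of emphasis: you spell out the transfer of the two minimality conditions and their interaction with trimming in more detail, whereas the paper disposes of them with a one-line appeal to \cref{prop: redcution correspondence}.
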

\begin{proof}
    Consider a \Dtype $B$-gap witness $xy$ in $\cB$, then by \cref{def: det B gap witness} there are $(q,f_q),(p,f_p)\in S$, $(q_\fin,g)\in F$ and runs $\rho:(q_0,f_0)\runsto{x}(p,f_p)\runsto{y}(q_\fin,g)$ 
    and $\chi:(q_0,f_0)\runsto{x}(q,f_q)$ such that the following holds.
    \begin{itemize} 
        \item $\minweight(q_0\runsto{x}Q)=\weight(\chi)$, i.e. $\chi:q_0\runsto{x}q$ is a minimal-weight run on $x$ (not necessarily accepting).
        \item $\minweight(q_0\runsto{xy}F)=\weight(\rho)$, i.e., $\rho$ is a minimal accepting run on $xy$.
        \item $\weight(\rho[x])-\weight(\chi[x])> B$, i.e., after reading $x$ and reaching $q$, the run $\rho$ is at least $B$ above the minimal run $\chi$.        
    \end{itemize}
    By the first part of \cref{prop: redcution correspondence} (whose application is unrelated to $\cB$ being trim), this readily implies that the word $(xy)|_{\Sigma}$ and the runs $\rho|_Q$ and $\chi|_Q$ almost satisfy the conditions of being a \Utype witness in $\cA$. All that remains is to show that $\chi|_Q$ can be extended to an accepting run on $y|_\Sigma$. 
    That is, it suffices to prove that $q\runsto{y}q_\fin$ (in $\cA$), which we now turn to show.

    Recall that $\cB$ is assumed to be trim,
    and consider the state $(q,f_q)\in S$. We claim that $f_q(q)=\keep$. Indeed, since $\chi|_Q:q_0\to q$, then $q$ is reachable from $q_0$ so $f_q(q)\neq \bot$. If, by way of contradiction $f_q(q)=\trim$, then by the outgoing consistency (and by induction) for every  state $(q',f')$ and word $z$ such that $(q,f_q)\runsto{z}(q',f')$ it holds that $f'(q')\neq \keep$. In particular, $(q',f')\notin G$. But then $(q,f_q)$ cannot reach $G$, and would therefore be trimmed from $\cB$. Therefore, we have that $f_q(q)=\keep$. 

    Next, recall that in the construction of $\cB$, the commitment components are updated \emph{deterministically} according to the updates. Thus, we can in fact assume $f_p=f_q$ and in particular $f_p(p)=f_p(q)=\keep$.
    Intuitively, this means that the DAG of runs represented by the updates in $y$ ``commits'' to both $p\runsto{y|_\Sigma}q_\fin$ and $q\runsto{y|_\Sigma}q_\fin$. 
    Formally, denote $y=(y_1,\alpha_1)\cdots (y_n,\alpha_n)$ and $\rho[y]=(p_0,f_0),\ldots,(p_n,f_n)$ (where $(p_0,f_0)=(p,f_p)$ and $(p_n,f_n)=(q_{\fin},g)$). Since $f_n\in G$ we have $f_n(q_\fin)=\keep$ and $f_n(q')\neq \keep$ for all $q'\neq q_\fin$. By induction from $n$ to $0$, and using the incoming consistency, we observe that for every $0\le i\le n$, if $f_i(p')=\keep$ for some $p'$, then $p'\runsto{y_{i+1}\cdots y_n} q_\fin$ (for $i=n$, the base case, this run is over the empty word).

    Therefore, we indeed have that $q\runsto{y|_\Sigma}q_\fin$, so $\chi|_Q$ has an accepting continuation on $y|_\Sigma$. This concludes all the requirements for $(xy)|_\Sigma$ to be a \Utype $B$-gap witness.
\end{proof}

Using the lemma, we now have that if $\cB$ is \emph{not} determinisable, then so is the trimmed version of it. Then, by \cref{thm:det iff bounded gap} for every $B$ there is a \Dtype $B$-gap witness in $\cB$. By \cref{lem:Dtype in B to Utype in A} there is also a \Utype $B$-gap witness in $\cA$, and therefore by \cref{thm:unambig iff bounded U gap} we have that $\cA$ is not \unable. By the contrapositive, if $\cA$ is \unable, then $\cB$ is determinisable.

This concludes the proof of \cref{thm:main reduction}. Since determinisability is decidable by~\cite{almagor2026determinization}, we have the following.
\unambigdecidable*

\section{Cost Register Automata and Width-$k$ WFAs}
\label{apx:CRA}
 \paragraph*{Cost Register Automata}
We start by formally defining CRAs. We restrict attention to the tropical semiring, so we define CRAs explicitly with the min-plus operations, as opposed to a general semiring as in~\cite{alur2013regular}. Technically, we use a matrix representation, as done in~\cite{Almagor2020WeakCostRegister}.

A \emph{Cost Register Automaton with linear register updates} of dimension $k$ ($k$-CRA) is a tuple
$\cN = \tup{Q,\Sigma,\delta,q_0,F,\upd,\fin}$
with the following components:
\begin{itemize} 
    \item $\tup{Q,\Sigma,\delta,q_0,F}$ is a deterministic finite automaton (DFA), with $\delta:Q\times \Sigma\to Q$ a deterministic transition function.
    \item For each state $q\in Q$ and letter $\sigma\in \Sigma$, the \emph{update}
    $\upd(q,\sigma)$ is a $k\times k$ matrix over $\bbZinf$. We refer to its entries as $\upd(q,\sigma)_{i,j}$ for $1\le i,j\le k$.
    \item For each $q\in F$, the \emph{output registers} are $\fin(q)\subseteq [k]$.
\end{itemize}

We turn to define the semantics of CRAs.
Recall that all matrix products are in the $(\bbZinf,\min,+)$ semiring.  
A \emph{valuation} of the $k$ registers is a row vector $\vec{r}\in \bbZinf^k$.
Given such a valuation and a transition update $\upd(q,\sigma)$, the \emph{next valuation} is $\vec{r'}=\vec{r}\cdot \upd(q,\sigma)$.
Denote $M=\upd(q,\sigma)$ with entries $m_{i,j}$. Unfolding tropical matrix multiplication gives the following register update:
\[
    r_i' = \min\{r_1 + m_{1,i}, r_2 + m_{2,i},\ldots, r_k + m_{k,i}\}
\]

Consider a word $w=\sigma_1\cdots \sigma_n$ and the unique run
\[
    \rho=q_0\xrightarrow{\sigma_1}q_1\xrightarrow{\sigma_2}\cdots\xrightarrow{\sigma_n}q_n
\]
of the DFA part of $\cN$ on $w$.
Then $\rho$ induces a sequence of valuations $\vec{r^0},\ldots,\vec{r^n}$ given by the zero vector $\vec{r^0}=\vec{0}$, and for every $0\le i< n$ we have $\vec{r^{i+1}}=\vec{r^i}\cdot \upd(q_i,\sigma_{i+1})$. 
If $q_n\in F$, the value assigned by $\cN$ to $w$ is $\min\{\vec{r^n}(i)\mid i\in \fin(q_n)\}$.
If $q_n\notin F$ the value is $\infty$.

\begin{remark}[On Initial and Final Valuations]
\label{rmk:init fin valuations}
Usually CRAs are defined with initial and final valuations, whereas we use $\vec{0}$ as the initial valuation, and only allow existing values of registers in the final valuation.
This choice is to keep in line with our choice of not having initial and final weights in WFAs, as discussed in \cref{rmk: initial and final weights}. 

All our results can be easily adapted to the setting where initial and final valuations/weights are added to both models.    

We also remark that it is not crucial to have the accepting states $F$ in the model, since they are overridden by $\fin$. Nonetheless, it is convenient so that the DFA part is complete.
\end{remark}

\subsubsection*{$k$-Width WFAs}
\paragraph*{Configurations}
We start with a standard definition of configurations, omitted from the preliminaries due to space constraints.
A \emph{configuration} of $\cA$ is a vector $\vec{c}\in \bbZinf^Q$ which, intuitively, describes for each $q\in Q$ the weight $\vec{c}(q)$ of a minimal run to $q$ thus far (assuming some partial word has already been read). 
Let $\vec{c_0}$ be the configuration that assigns $0$ to $q_0$ and $\infty$ to $Q\setminus\{q_0\}$.
Intuitively, before reading a word, $\cA$ is in the configuration $\vec{c_0}$. 

We adapt our notations to include a given starting configuration $\vec{c}$, as follows.
Given a configuration $\vec{c}$ and a word $w$, they induce a new configuration $\vec{c'}$ by assigning each state the minimal weight with which it is reachable via $w$ from $\vec{c}$. We denote this by 
$\xconf_{\vec{c}}(w)(q)=\minweight_{\vec{c}}(Q\runsto{w} q)$ for every $q\in Q$.
In particular, $\xconf_{\vec{c_0}}(w)$ is the configuration that $\cA$ reaches by reading $w$ along minimal runs. 
We denote by $\supp(\vec{c})=\{q\mid \vec{c}(q)\neq \infty\}$ the \emph{support} of $\vec{c}$.

Consider a WFA $\cA$. The \emph{width} of $\cA$ is the maximal number of states that are simultaneously reachable in $\cA$. That is, the maximal $k\in \bbN$ such that there is a word $w$ with $|\supp(\xconf_{\vec{c_0}}(w))|=k$. 

As mentioned in \cref{sec:prelim}, in this section, for convenience, we allow a \emph{set} of initial states $Q_0$ in a WFA. Note that this does not affect the width, since this set can always be replaced with a single initial state.
For a set $Q$, we denote by $\binom{Q}{\le k}=\{S\subseteq Q\mid |S|\le k\}$ the set of subsets of $Q$ of size at most $k$. 

\subsection{Equivalence of $k$-CRA and Width-$k$ WFA}
\label{apx:sec:CRA equiv WFA}
A fundamental result in~\cite{alur2013regular} is that CRA are expressively equivalent to WFAs. In one direction, the idea is that a WFA can nondeterministically track each register of a CRA using its states. For the converse, a CRA can simulate a nondeterministic WFA by keeping a register for each state of the WFA, and updating the registers according to the transitions of the WFA. 

We start by refining this result to take into account the number of registers. Specifically, we show that the number of registers corresponds exactly to the width of the WFA. The construction in the first direction (CRA$\to$WFA) is identical to that of~\cite{alur2013regular}, and we only make the observation on the width. In the converse direction, we need to modify the construction somewhat in order to re-use registers, instead of having a register for each state. 
\begin{theorem}
    \label{apx:thm:k CRA equivalent to k width WFA}
    The class of functions representable by a $k$-CRA is exactly that representable by width $k$ WFAs.
\end{theorem}
\begin{proof}
    \noindent \textbf{CRA$\to$WFA}\quad
    Consider a $k$-CRA $\cN=\tup{Q,\Sigma,\delta,q_0,F,\upd,\fin}$.
    We construct a WFA $\cA=\tup{Q',\Sigma,Q_0,\Delta,F'}$
    where the states are $Q' = (Q\times [k])$, the initial states are $Q_0=\{(q_0,i)\mid i\in [k]\}$, and the accepting states are $F' =\{(q,i)\mid q\in F,\ i\in\fin(q)\}$.
    Intuitively, the state $(q,i)$ tracks the state $q$ and register $i$.

    For every transition $q'=\delta(q,\sigma)$ and $M=\upd(q,\sigma)=\{m_{i,j}\}_{1\le i,j\le k}$ we add to $\Delta$ the transitions $((q,i),\sigma,m_{i,j},(q',j))$ for every $j\in [k]$. 
    Intuitively, the update for register $j$ on the transition $q\runsto{\sigma}q'$ is a minimum, part of which is $r_i+m_{i,j}$, we allow the run from copy $i$ to enter copy $j$ with weight of $m_{i,j}$.


    It is now easy to see by induction that $\cA$ tracks the valuations of $\cN$. Specifically, for a word $w=\sigma_1\cdots \sigma_n$ and its corresponding run $\rho:q_0\runsto{w}q_n$ and valuation sequence $\vec{r^1},\ldots,\vec{r^n}$ in $\cN$, we have in $\cA$ that the configuration $\vec{c}=\xconf_{\vec{c_0}}(w)$ satisfies $\vec{c}((q_n,i))=\vec{r^n}_i$, and $\vec{c}((q',i))=\infty$ if $q'\neq q_n$ for all $i\in [k]$. 
    In particular, $\cA$ correctly tracks the valuations, and since the only state-component that has finite value is $q_n$, we also have $|\supp(\xconf_{\vec{c_0}}(w))|=k$, hence $\cA$ has width $k$.

    \smallskip
    \noindent \textbf{WFA$\to$CRA}\quad
    Consider a WFA $\cA=\tup{Q,\Sigma,Q_0,\Delta,F}$ of width $k$. We fix some arbitrary linear order $\prec$ on $Q$. Since $\cA$ has width $k$, then for every word we can write $\supp(\xconf_{\vec{c_0}}(w))=\{q_1,\ldots,q_d\}$ where $q_1\prec \ldots \prec q_d$ for $d\le k$. 

    We construct an equivalent CRA $\cN$. Intuitively, $\cN$ deterministically tracks the subset-construction of $\cA$, noticing that this only involves tracking subsets in $\binom{Q}{\le k}$. Then $\cN$ uses $k$ registers to track the minimal weight to each reachable state. The crux of the construction, and where it differs from that of~\cite{alur2013regular}, is that we do not use a separate register for each state, but rather reuse the same $k$ registers and associate register $i$ with the $i$-th element of the current configuration. We proceed with the formal construction.

    We define $\cN=\tup{S, \Sigma,\delta,s_0,G,\upd,\fin}$ as follows. The states are $S=\binom{Q}{\le k}$. For every $T\in S$ and $\sigma\in \Sigma$ we have $\delta(T,\sigma)=\{q'\in Q \mid \exists q\in T.\ q\runsto{\sigma}q'\}$. The initial state is $s_0=Q_0$. Note that by the width $k$ assumption, $\delta(T,\sigma)\in S$ and $Q_0\in S$.
    The final states are $G=\{T\in S\mid T\cap F\neq \emptyset\}$. 
    The update function is defined as follows. Let $T\in S$ and $\sigma\in\Sigma$ and write $T = \{q_1,\ldots,q_d\}$ 
         and $\delta(T,\sigma) = T' = \{q'_1,\ldots,q'_{d'}\}$ where
    $q_1\prec\cdots\prec q_d$ and $q'_1\prec\cdots\prec q'_{d'}$, and $d,d'\le k$.
    Intuitively, when $\cA$ reaches a configuration whose support is $T$, in $\cN$ register $i$ stores the cost of the $i$-th state $q_i$. After reading $\sigma$, we update so that register $j$ stores the cost of the $j$-th state $q'_j$ in $T'$. This is captured by setting $\upd(T,\sigma)$ to be the matrix $M=\{m_{i,j}\}_{1\le i,j\le k}$ defined by $m_{i,j}=\min\{c \mid (q_i,\sigma,c,q'_j)\in\Delta\}$ (for $1\le i\le d$ and $1\le j\le d'$), and by setting $m_{i,j}=\infty$ for all remaining entries. Note that if there are no transitions $(q_i,\sigma,c,q'_j)\in\Delta$ then the minimum is $\infty$.
    Finally, we define $\fin(T) = \{i\in [k] \mid q_i\in F\}$
    i.e., the set of indices of registers corresponding to final states in $T$.

    It is again simple to see by induction that $\cN$ correctly tracks the configurations of $\cA$, and that the output registers yield the minimal run, thus $\cN$ and $\cA$ are equivalent.
    Clearly $\cN$ has $k$ registers (by definition).
\end{proof}

\section{Undecidability of Width Minimisation in WFA}
\label{apx:sec: width reduction undecidable}
In the remainder of this section we prove our undecidability result:
\widthund*
 Note that by \cref{apx:thm:k CRA equivalent to k width WFA}, we get that CRA register minimisation is also undecidable:
\begin{restatable}{corollary}{corundecidable}
     \label{cor:CRA register minimisation undecidable}
     The following problem is undecidable already for $k=7$: given a $k$-CRA, decide whether there is an equivalent $(k-1)$-CRA.
 \end{restatable}

The starting point of the proof is a reduction given in~\cite{Almagor2020Whatsdecidableweighted} showing the undecidability of the upper-boundedness problem for WFA.
For brevity, we do not describe the full construction, but state its properties in a convenient way for our purpose.
The reduction in~\cite{Almagor2020Whatsdecidableweighted} is from the $0$-halting problem of \emph{two counter machines}. For this paper, it suffices to know that two counter machines are a computational model, and that a certain problem about it, dubbed the $0$-halting problem, is undecidable. 
We can now state the following.
\begin{theorem}[Construction from \cite{Almagor2020Whatsdecidableweighted}]
\label{thm:construction from whats decidable}
Given a two-counter machine $\cM$, we can compute a WFA $\cA=\tup{Q,\Sigma,q_0,\Delta,F}$ with the following
properties:
\begin{enumerate} 
    \item $\cA$ has width $6$.
    \item All the states of $\cA$ are accepting.
    \item There is a letter $@\in \Sigma$ such that $(q,@,0,q_0)$ are the only incoming transitions to $q_0$, and are the only transitions on $@$.
    \item If $\cM$ does not 0-halt, then $\cA(w)\le 0$ for every $w\in \Sigma^*$.
    \item If $\cM$ halts, there is a word $x\in (\Sigma\setminus \{@\})^*$ such that $\minweight_{\cA}(q_0\runsto{x@}q_0)=1$.
\end{enumerate}
In particular, $\cA$ is unbounded from above if and only if $\cM$ 0-halts, in which case $\minweight_{\cA}(q_0\runsto{x@}Q)=\minweight_{\cA}(q_0\runsto{x@}q_0)=1$ (i.e., $q_0\runsto{x@}q_0$ is the only run on $x@$, and has weight 1).
\end{theorem}

Our aim now is to construct a WFA $\cA'$ such that $\cA'$ has width $7$, and there is an equivalent WFA of width $6$ if and only if $\cA'$ is upper bounded.
We obtain $\cA'=\tup{Q',\Sigma',Q'_0,\Delta',F'}$ from $\cA$ as follows (see \cref{fig:reduction 7 width}). The states are $Q'=Q\cup \{q_a,q_{\Cl{1}},\ldots,q_{\Cl{6}}\}$. The alphabet is $\Sigma'=\Sigma\cup \{\sep,\Cl{1},\ldots,\Cl{6},a,\xCl{1},\ldots,\xCl{6}\}$. The initial states are $Q'_0=\{q_0,q_a\}$, and all the states are accepting: $F'=Q'$. The transitions $\Delta'$ are:
\[\begin{split}
\Delta'=\Delta &\cup \{(q_a,\sigma,0,q_a)\mid \sigma\in \Sigma'\setminus\{a\}\}\cup \{(q_0,\sep,0,\Cl{i})\mid i\in [6]\} \\
&\cup \{(q_{\Cl{i}},\Cl{i},-1,q_{\Cl{i}}),(q_{\Cl{i}},\sigma,0,q_{\Cl{i}}) \mid i\in [6],\sigma\notin \{\Cl{i},\xCl{i}\}\}
\end{split}\]
In particular notice that there are no transitions from $q_a$ with $a$, and no transitions from $q_{\Cl{i}}$ with $\xCl{i}$. Thus, we think of $a$ and $\xCl{i}$ as ``killing'' $q_a$ and $q_{\Cl{i}}$, respectively.

Observe that $\cA'$ has width 7. Indeed, since $\cA$ has width 6 (by \cref{thm:construction from whats decidable}), then after reading any prefix that does not contain $\sep$, the reachable states are at most $6$ states from $\cA$, and the state $q_a$. If $\sep$ is read, then the reachable states are at most $\{q_a,q_{\Cl{1}},\ldots, q_{\Cl{6}}\}$.

\begin{figure}[ht]
\centering
\begin{tikzpicture}[
    ->, >=stealth',
    auto,
    semithick,
    node distance=1.8cm,
    every state/.style={circle, draw, minimum size=20pt, inner sep=1pt},
    every loop/.style={looseness=4, min distance=25pt}
  ]

  \node[state,initial above,initial text={}] (qa) at (-3,0.3) {$q_a$};
  \node[state,initial above,initial text={}] (q0) at (0,0.3) {$q_0$};

  \node[state] (c1)   at (2,-0.3) {$q_{\Cl{1}}$};
  \node  (dots) at (4.7,-0.2) {$\ldots$};
  \node[state] (c6)   at (6,-0.3) {$q_{\Cl{6}}$};

  \draw[rounded corners] (-1.6,0.9) rectangle (0.6,-0.9);
  \node at (-1.3,0.6) {$Q$};

  \draw[->] (qa) edge[in=210, out=150, looseness=6]
    node[pos=0.5, left, align=center] {$\sigma,0$} (qa);

  \draw[->, dotted] (q0) edge[in=220, out=270, looseness=7]
    node[pos=0.7,left=-1pt] {$ x@ ,1 $} (q0);

  \draw[->] (q0) to[bend right=10] node[pos=0.6, above] {$\sep,0$} (c1);
  \draw[->] (q0) to[bend right=30] node[pos=0.2, below] {$\sep,0$} (c6);

  \draw[->] (c1) edge[in=-30, out=30, looseness=6]
    node[align=center, right] {$\begin{aligned} 
      \Cl{1},&-1\\[-4pt] 
      \sigma,&0 
    \end{aligned}$} (c1);

  \draw[->] (c6) edge[in=-30, out=30, looseness=6]
    node[align=center, right] {$\begin{aligned} 
      \Cl{6},&-1\\[-4pt] 
      \sigma,&0 
    \end{aligned}$} (c6);

    \node (X1) at ($(c1)+(0,1.3)$) {\Large$\times$};
\draw[->] (c1) -- node[right] {$\xCl{1}$} (X1);

\node (X6) at ($(c6)+(0,1.3)$) {\Large$\times$};
\draw[->] (c6) -- node[right] {$\xCl{6}$} (X6);

\node (Xa) at ($(qa)+(0,-1.1)$) {\Large$\times$};
\draw[->] (qa) -- node[right] {$a$} (Xa);

\end{tikzpicture}
\caption{Construction of the WFA $\cA'$ from $\cA$. The letter $\sigma$ represents all ``non-killing'' letters in the transitions. Killing letters lead to $\times$.
Intuitively, we start either at $q_a$ or at $q_0$ (in $\cA$). In $q_a$ there is a self loop with weight $0$ on everything except $a$. From $q_0$ we can run as $\cA$, but also move to the $\Cl{i}$ components on $\sep$. In state $q_{\Cl{i}}$ we self loop with weight $-1$ on $\Cl{i}$, and with $0$ on all other letters except $\xCl{i}$. If $\cA$ is unbounded, there is a self loop on $q_0$ with the word $x@$.}
\label{fig:reduction 7 width}
\end{figure}

We prove the correctness of the reduction, starting with a brief intuition. First, if $\cA$ is upper bounded, it can be easily seen that $q_a$ is redundant. Thus, we can remove $q_a$ and obtain an equivalent width $6$ WFA. For the converse, we observe that taking a word of the form \[(x@)^n\Cl{1}^{k_1}\Cl{2}^{k_2}\Cl{3}^{k_3}\Cl{4}^{k_4}\Cl{5}^{k_5}\Cl{6}^{k_6}\] can yield runs of very different weights in the seven components $q_a,q_{\Cl{1}},\ldots, q_{\Cl{6}}$. Intuitively, this means that we need at least seven states to track this information. We formalise these arguments in the following.

The formal proof is in two parts. For the easy direction, assume $\cA$ is upper bounded. By \cref{thm:construction from whats decidable} this means that $\cA(w)\le 0$ for every word $w\in \Sigma^*$. We claim that therefore, we can remove the state $q_a$ from $\cA'$ without affecting its function. 
Indeed, for every word $y\in \Sigma'^*$, its run in $q_a$ (if exists, i.e., if $y$ does not contain $a$) has weight $0$. However, its run in the $\cA$ component and the $\Cl{i}$ components has weight at most $0$ (since $a$ has $0$ self loops in these components).
It follows that the run in $q_a$ is never strictly minimal, so $q_a$ can be discarded without changing the function of $\cA'$. Note that this reduces the width of $\cA'$ to $6$, and therefore there exists an equivalent width $6$ WFA.

We turn to the converse direction, namely proving that if $\cA$ is unbounded, then $\cA'$ does not have an equivalent width $6$ WFA. The intuitive initial idea is the following: since $\cA'$ is unbounded, there is a word $\zeta=x@\in \Sigma^*$ such that $q_0\runsto{\zeta}q_0$ with weight $+1$. Then, concatenating it with some $\Cl{i}^*$ can yield a run that increases and then decreases. However, there is also a run on this word in the $q_a$ component that maintains weight $0$. This structure is akin to the well known WFA for $w\mapsto\min\{\numin{\zeta}(w),\numin{a}(w)\}$, which cannot be determinised (see e.g., \cite{chatterjee2010quantitative,almagor2026determinization}). 
This suggests that the $q_a$ component really ``adds width'' to the $\cA$ component. 

The two gadgets that enable us to prove this are the $\Cl{i}$ components and their killing letters. Intuitively, each $\Cl{i}$ ``counts'' (negatively) a different letter, but we can abruptly kill any of them\footnote{We remark that we do not actually need $\xCl{6}$, but we add it to simplify the presentation}. 
It stands to reason that we really need all 6 of them to compute this function.

We henceforth essentially treat $\zeta$ as a single letter, with the behaviour $\minweight(q_0\runsto{\zeta}q_0)=1$. 
Assume by way of contradiction that $\cB=\tup{S,\Sigma',S_0,\Theta,G}$ is a width 6 WFA equivalent to $\cA'$. 
Let $\bigM>12\norm{\cB}$. Intuitively, if two runs are at distance at least $\bigM$ from each other, then after reading at most six letters, the runs cannot yield the same weight, since the upper run can decrease by at most $6\norm{\cB}$, and the lower can increase by at most $6\norm{\cB}$. 
Consider the word 
\[w=\zeta^{6\bigM}\Cl{1}^{5\bigM}\Cl{2}^{4\bigM}\Cl{3}^{3\bigM}\Cl{4}^{2\bigM}\Cl{5}^{\bigM}\] 
and the suffix $x=a\xCl{1}\xCl{2}\xCl{3}\xCl{4}\xCl{5}$. 
Upon reading $w$, the runs of $\cA'$ accumulate $6\bigM$ in each of the $\Cl{i}$ components, then loses weight so that $\minweight_{\cA'}(Q_0\runsto{w}q_{\Cl{i}})=i\bigM$ for every $i\in [6]$. In addition, $\minweight_{\cA'}(Q_0\runsto{w}q_a)=0$.
Then, reading $x$ kills the different components, starting from $q_a$, then $q_{\Cl{1}}$ to $q_{\Cl{6}}$. In particular, this induces ``jumps'' of weight $\bigM$ with each letter, as follows. 
Observe that $x$ has seven prefixes, which we denote $x_0=\epsilon$, $x_1=a$, $x_2=a\xCl{1}$, $x_3=a\xCl{1}\xCl{2},\ldots,x_6=x$. We then have $\cA'(wx_i)=i\bigM$, and therefore also $\cB(wx_i)=i\bigM$.

Recall that $\cB$ has width 6. Let $T=\supp(\xconf_{\vec{c_0}}(w))$ in $\cB$, then $|T|\le 6$. For every $i\in \{0,\ldots, 6\}$ let $t_i\in T$ such that 
\[\minweight_{\cB}(S_0\runsto{wx_i} G)=\minweight_{\cB}(S_0\runsto{w}t_i\runsto{x_i}G)=i\bigM\] That is, $t_i$ is a state such that the minimal run of $\cB$ on $wx_i$ passes through $t_i$ after reading $w$. 
By the pigeonhole principle, there are $i<j$ such that $t_i=t_j$, denoted $t$.
Recall that upon reading a single letter, each run can change its value by at most $\norm{\cB}$. We thus have:
\[\minweight_{\cB}(S_0\runsto{w}t)\le \minweight_{\cB}(S_0\runsto{w}t\runsto{x_i}G)+i\norm{\cB} \le i\bigM +6\norm{\cB}\]
(since $i\le 6$).
Similarly, going via $t=t_j$, we have
\[\minweight_{\cB}(S_0\runsto{w}t)\ge \minweight_{\cB}(S_0\runsto{w}t\runsto{x_j}G)-j\norm{\cB} \ge j\bigM -6\norm{\cB}\]
But then we have $i\bigM +6\norm{\cB}\ge j\bigM -6\norm{\cB}$, so $(j-i)\bigM \le 12\norm{\cB}$, but $j-i\ge 1$, so we get $\bigM\le 12\norm{\cB}$, in contradiction to our choice of $\bigM>12\norm{\cB}$.

We conclude that there is no width $6$ WFA equivalent to $\cA'$.
This completes the correctness of the reduction, and the proof of \cref{thm:width reduction undecidable}.

\end{document}